\documentclass{article}

\usepackage[utf8]{inputenc}
\usepackage[T1]{fontenc}
\usepackage[a4paper, portrait, margin=1in]{geometry}
\usepackage{amsmath,amsfonts,amssymb,amsthm}
\usepackage{booktabs,enumitem,float}
\usepackage{graphicx}
\usepackage{mathpazo}
\usepackage{xcolor}
\usepackage[round]{natbib}
\usepackage{hyperref}
\hypersetup{colorlinks=true, linkcolor=blue, citecolor=teal}
\usepackage[nameinlink, capitalize, noabbrev]{cleveref}
\usepackage[ruled,vlined,linesnumbered]{algorithm2e}
\usepackage{authblk}

\newtheorem{theorem}{Theorem}[section]
\newtheorem{definition}{Definition}[section]
\newtheorem{proposition}{Proposition}[section]
\newtheorem{assumption}[theorem]{Assumption}
\newtheorem{remark}[theorem]{Remark}

\DeclareMathOperator{\Law}{Law}
\DeclareMathOperator{\Cov}{Cov}
\newcommand{\argmin}{\operatorname{arg\,min}}

\crefname{section}{Section}{Sections}
\Crefname{section}{Section}{Sections}
\crefname{subsection}{Section}{Sections}
\Crefname{subsection}{Section}{Sections}
\crefname{appendix}{Appendix}{Appendices}
\Crefname{appendix}{Appendix}{Appendices}
\crefname{algocf}{Algorithm}{Algorithms}
\crefname{assumption}{Assumption}{Assumptions}
\Crefname{assumption}{Assumption}{Assumptions}
\crefname{remark}{Remark}{Remarks}
\Crefname{remark}{Remark}{Remarks}

\newcommand{\cF}{\mathcal{F}}
\newcommand{\cP}{\mathcal{P}}
\newcommand{\cU}{\mathcal{U}}
\newcommand{\bs}{\mathbf{s}}
\newcommand{\bv}{\mathbf{v}}
\newcommand{\Embb}{\mathbb{E}}
\newcommand{\edisp}{\varepsilon_{\mathrm{disp}}}
\newcommand{\emart}{\varepsilon_{\mathrm{mart}}}

\title{Global Multi-Maturity SPX--VIX Calibration Beyond Markovian Stitching}
\renewcommand{\thefootnote}{\fnsymbol{footnote}}
\author[1]{Atithi Acharya\thanks{\texttt{atithi.acharya@jpmchase.com}}}
\author[1]{Yue Sun\thanks{\texttt{yue.sun@jpmchase.com}}}
\author[1]{Brandon Augustino}
\author[1]{Shouvanik Chakrabarti}
\author[1]{Shree Hari Sureshbabu}
\author[2]{Charlie Che\thanks{\texttt{charlie.che@jpmchase.com}}}
\affil[1]{Global Technology Applied Research, JPMorganChase, New York, NY 10001, USA}
\affil[2]{Quantitative Trading \& Research, JPMorganChase, New York, NY 10179, USA}
\date{2026}

\begin{document}
\maketitle
\renewcommand{\thefootnote}{\arabic{footnote}}\setcounter{footnote}{0}

\begin{abstract}
We develop a global framework for joint S\&P~500 (SPX)--VIX smile calibration across multiple maturities without the conditional-independence restriction induced by Markovian stitching.
Exact local and global feasibility are equivalent: every globally feasible law has a block-preserving SPX-Markovization that leaves each monthly $(S_i,V_i,S_{i+1})$ law unchanged.
Nevertheless, stitched laws can form a strict subset of globally feasible path laws because Markovization discards dependence on earlier history beyond the current SPX level.
Adjacent smiles therefore cannot identify this dependence, and laws with identical monthly calibrations can price multi-period claims differently.
Under the standard Markov reference, relative entropy selects the stitched minimum-information completion; non-Markov dependence requires cross-period information, an appropriate objective, or a history-dependent prior.
For finite discretizations, we introduce an augmented-Bregman mirror-descent scheme.
It preserves the fit to observable quote moments while controlling martingale and dispersion residuals.
In a controlled infeasible affine system, this split keeps prescribed marginals about $25$ times tighter than cyclic row projection by exposing the discrepancy in the conditional rows.
An exact finite-state example verifies block preservation and exhibits material cross-period price changes after Markovization.
On smoothed SPX and VIX surfaces, numerical calculations illustrate a finite-budget penalty path: the worst fitted-smile error remains below $0.70$ volatility points across the reported sweep while the bulk conditional diagnostics improve substantially.
\end{abstract}

\section{Introduction}\label{sec:intro}

The joint calibration of option-pricing models to S\&P~500 (SPX) and VIX options is a central problem in quantitative finance, driven by the use of volatility derivatives for risk management and hedging.
Accurate joint calibration to both SPX and VIX derivatives is essential for consistent pricing and for the avoidance of arbitrage between desks or across products~\citep{guyon2020joint,cuchiero2025joint}.
The problem is hard because the two markets have conflicting characteristics:
SPX options exhibit a pronounced negative skew, especially at short maturities, while VIX options often display lower implied volatilities at the same maturities, and classical stochastic-volatility models cannot reconcile the two~\citep{carr2006tale,jacquier2018vix}.
A natural nonparametric route is to view joint SPX--VIX calibration through the lens of dispersion-constrained \emph{martingale optimal transport} (MOT)~\citep{Beiglbck2013,guyon2020joint}.
This perspective gives a clean solution in the two-maturity setting, SPX at $T_1$ and $T_2=T_1+30$ days, and VIX at $T_1$, but the multi-maturity problem introduces an identification question that is absent from any one block.
A common construction solves independent two-maturity problems and stitches them recursively through their shared SPX marginal.
This is computationally attractive, but it completes the otherwise unidentified cross-period dependence by assuming that the next SPX--VIX block is conditionally independent of the earlier path given the shared SPX level.
That assumption cannot be tested with the blockwise calibration instruments themselves.
Expiration misalignment creates a separate practical difficulty because no traded SPX marginal is available at a VIX-only date.
Indeed, \citet[Section~8]{guyon2020joint} observes that a fully coupled treatment across all maturities is ``doable in principle but impractical'' because the problem dimension grows exponentially.

This paper formulates the full-vector treatment and separates two issues that stitching can obscure.
At the exact level, local and global feasibility are equivalent, but the globally admissible path-law class can be strictly larger: a block-preserving Markovian projection leaves every monthly calibration unchanged while discarding cross-period volatility information.
At the numerical level, independently processed surfaces and restricted finite supports can make quote, martingale, and dispersion rows stiff or incompatible at the requested tolerances.
The first issue is one of identification and model risk; the second is one of robust calibration.
Our global formulation provides the common state space needed for both, while the proposed augmented-Bregman mirror-descent scheme addresses the latter by applying cyclic Bregman corrections to observable quote moments and penalizing the conditional residuals.

\subsection{Related work}\label{subsec:related}

Early parametric approaches, double CEV~\citep{gatheral2008consistent}, regime-switching Heston~\citep{papanicolaou2014regime}, and jump diffusions~\citep{cont2013consistent,baldeaux2014consistent}, offer flexibility but generally fail to jointly calibrate at short maturities.
Rough volatility models~\citep{gatheral2018volatility,jacquier2025rough,gatheral2020quadratic} and neural SDEs~\citep{guyon2023neural} improve accuracy at substantial computational cost.
Computationally, the two-maturity MOT problem is often approached with Sinkhorn-type algorithms~\citep{cuturi2013sinkhorn,altschuler2017near,Benamou2015Iterative,lin2022complexity}; conditional martingale and dispersion rows require additional projection or optimization steps, and the state dimension grows quickly with the number of maturities.
The closest exact-calibration benchmark is \citet{bourgey2024fast}, who develop fast discrete- and continuous-time methods for the joint SPX--VIX smile problem.
Alternative routes use signature methods~\citep{cuchiero2025joint} and polynomial diffusions~\citep{abi2025joint}.
Beyond calibration, \citet{che2026spx} compute risk sensitivities within the entropic MOT setting for a \emph{fixed} calibrated coupling; our contribution addresses the antecedent multi-maturity formulation and its finite-dimensional reconciliation.
Our block-preserving construction is also related in spirit to Markovian projection and mimicking~\citep{gyongy1986mimicking,brunick2013mimicking}; the precise distinction is given in \cref{rem:mimicking}.

\subsection{Contributions}\label{subsec:contributions}

\paragraph{Modeling: from stitching to global coupling.}
\begin{enumerate}[label=(\roman*)]
    \item We prove that exact local and global feasibility are equivalent and construct a block-preserving Markovian projection from every global law to a stitched law (\cref{thm:feasibility_equivalence}).
    \item We prove that the stitched full-path law class can nevertheless be a strict subset of the global class: the projection preserves every monthly calibration payoff but can change cross-period dependence and prices (\cref{thm:strict_inclusion,prop:information_loss}).
    \item We show that the standard Markov reference selects the stitched law as the minimum-information exact KL completion (\cref{prop:kl_markovization}). This identifies precisely what must be added (a history-dependent prior, cross-period target, or global payoff objective) to select non-Markov memory.
    \item We characterize compatibility under quote bands: individually feasible relaxed blocks need not admit a common shared-marginal selection, and independent solves can choose different representatives of the shared SPX smile; one global solve enforces a common selection (\cref{prop:band_compatibility,subsec:seam}).
\end{enumerate}

\paragraph{Algorithm: robust finite-dimensional reconciliation.}
\begin{enumerate}[label=(\roman*),resume]
    \item We propose an \emph{augmented-Bregman mirror-descent} scheme that applies cyclic Bregman corrections to normalization, forward, and selected option-price constraints while penalizing the conditional families.
    A finite-grid penalty limit identifies the corresponding least-violating law (\cref{prop:penalty_limit,subsec:augmented}).
    \item On a controlled infeasible affine system, the hard/soft split keeps prescribed marginals about $25$ times tighter than cyclic row projection by leaving a reported conditional residual (\cref{tab:feasibility}).
    \item On smoothed SPX and VIX surfaces, numerical calculations illustrate the practical seam under independent relaxed calibration, the finite-budget quote--conditional tradeoff, and the different numerical roles of the martingale and dispersion penalties (\cref{subsec:market_seam,subsec:market_penalty,subsec:market_projection}).
    \item We extend the formulation to misaligned dates using free SPX nodes and give an exact reference construction for non-overlapping VIX spans (\cref{sec:misaligned,prop:merged_reference}).
\end{enumerate}

\subsection{Organization}\label{subsec:organization}

\Cref{sec:setting} fixes the setting, notation, and the three constraint families.
\Cref{sec:stitching_to_global} develops the modeling contribution: it distinguishes exact stitching from independent relaxed block solves, establishes feasibility equivalence and strict path-law inclusion, and states the no-arbitrage framework.
\Cref{sec:reference} constructs the reference measure and identifies the stitched law selected by the standard entropy criterion.
\Cref{sec:algorithms} develops the finite-dimensional hard/soft method and its penalty-limit interpretation.
\Cref{sec:misaligned} gives the merged-timeline formulation.
\Cref{sec:experiments} collects the finite-state checks and market-data illustrations.
\Cref{sec:conclusion} concludes.

\section{Setting and Notation}\label{sec:setting}

We work in the setting of \citet{guyon2020joint}, extended to $m$ SPX maturities and $m-1$ VIX maturities.
Fix maturities $T_1 < T_2 < \cdots < T_m$ with $T_{i+1} - T_i = \tau = 30/365$ years.
We assume zero interest rates, repos, and dividends for simplicity.
We write $S_i := S_{T_i}$ for the SPX value at $T_i$, $V_i$ for the VIX at $T_i$, and introduce the strictly convex log-contract payoff
\begin{equation}\label{eq:log_contract}
    L(x) := -\frac{2}{\tau}\ln x.
\end{equation}
By definition of the VIX, the price at $T_i$ of the forward-starting log-contract paying $-\frac{2}{\tau}\ln\frac{S_{i+1}}{S_i}$ at $T_{i+1}$ is $V_i^2$.

For probability measures $\mu$ and $\nu$, write
\[
D_{\mathrm{KL}}(\mu\Vert\nu)
:=
\begin{cases}
\displaystyle\int\log\!\left(\frac{d\mu}{d\nu}\right)d\mu,
& \mu\ll\nu,\\[4pt]
+\infty, & \text{otherwise}.
\end{cases}
\]

The market data consists of risk-neutral marginals $\mu_{S_i}$ for $i=1,\ldots,m$ (extracted from SPX option prices at each maturity, e.g.\ via Breeden--Litzenberger~\citep{breeden1978prices}) and $\mu_{V_i}$ for $i=1,\ldots,m-1$ (extracted from VIX futures and option prices).
We denote by $S_0 > 0$ the initial SPX value.

\begin{assumption}\label{ass:marginals}
The given marginals satisfy
\begin{equation}
    \Embb^{\mu_{S_i}}[S_i] = S_0, \quad \Embb^{\mu_{S_i}}[|\ln S_i|] < \infty, \quad i = 1, \ldots, m;
\end{equation}
\begin{equation}
    \Embb^{\mu_{V_i}}[V_i^2] < \infty, \quad i = 1, \ldots, m-1.
\end{equation}
Moreover, the consistency condition
\begin{equation}\label{eq:consistency_global}
    \Embb^{\mu_{V_i}}[V_i^2] = \Embb^{\mu_{S_{i+1}}}[L(S_{i+1})] - \Embb^{\mu_{S_i}}[L(S_i)], \quad i = 1, \ldots, m-1
\end{equation}
holds, and $\mu_{S_i}$, $\mu_{S_{i+1}}$ are in convex order for each $i$.
\end{assumption}

\paragraph{The three constraint families.} Throughout the paper a calibrated model is a law $\mu$ over the merged-timeline state $(S_1, V_1, S_2, V_2, \ldots, V_{m-1}, S_m)$ that satisfies three families of constraints, which we name once and reuse:
\begin{itemize}
    \item \textbf{(C1) Marginal:} each SPX/VIX smile is matched:
    $S_i \sim \mu_{S_i}$ and $V_i \sim \mu_{V_i}$.
    \item \textbf{(C2) Martingale:} the forward-adjusted conditional mean of the next SPX level equals the current one (here $\Embb^\mu[S_{i+1}\mid\cF_i]=S_i$ under zero rates; in general $=\tfrac{F_{i+1}}{F_i}S_i$).
    \item \textbf{(C3) Dispersion:} the conditional log-contract equals $\mathrm{VIX}^2$:
    $\Embb^\mu[L(S_{i+1}/S_i)\mid\cF_i]=V_i^2$.
\end{itemize}
The single structural distinction between the two frameworks studied here is how adjacent blocks are joined.
Within each monthly block, stitching imposes (C2)--(C3) conditional on $(S_i,V_i)$, but it draws the block $(V_i,S_{i+1})$ from the earlier path using only the shared SPX state $S_i$.
The \emph{global} model instead imposes (C2)--(C3) conditional on the entire history $\cF_i=\sigma(S_1,V_1,\ldots,S_i,V_i)$ and does not impose this seam-wise conditional independence.

\begin{remark}[Two notations]\label{rem:notation}
We use the \emph{market-time} notation $(S_1,V_1,S_2,V_2,\ldots,S_m)$ for exposition.
The discretized solver uses an equivalent \emph{solver-coordinate} representation in which each transition is generated by an innovation variable $Z_i$ through $S_{i+1} = f_i(S_i, V_i, Z_i)$; we introduce it only where the discretization is described (\cref{subsec:discrete_formulation}) and otherwise work in market time.
\end{remark}

\section{From Stitching to Global Coupling}\label{sec:stitching_to_global}

This section motivates the global formulation.
After recalling the stitched construction (\cref{subsec:stitching}), we separate a \emph{practical} issue, independently solved relaxed blocks can select different representatives of a shared SPX smile (\cref{subsec:seam}), from a \emph{structural} one, exact stitching imposes a Markov restriction that is invisible to every monthly calibration block (\cref{subsec:stitching_limits}).
We then present the globally coupled model and its duality theory (\cref{subsec:global_model,subsec:duality}) and close by stating the computational difficulty it creates, which the augmented-Bregman solver of \cref{sec:algorithms} resolves (\cref{subsec:handoff}).

\subsection{The Stitching Approach}\label{subsec:stitching}

For each $i = 1, \ldots, m-1$, the monthly feasible set is defined as in \citet{guyon2020joint}.

\begin{definition}[Monthly feasible set]\label{def:monthly_feasible}
$\cP_i := \cP(\mu_{S_i}, \mu_{V_i}, \mu_{S_{i+1}})$ is the set of all probability measures $\nu_i$ on $\mathbb{R}_{>0} \times \mathbb{R}_{\geq 0} \times \mathbb{R}_{>0}$ such that
\begin{gather}\label{eq:monthly_constraints}
    S_i \sim \mu_{S_i}, \quad V_i \sim \mu_{V_i}, \quad S_{i+1} \sim \mu_{S_{i+1}}, \nonumber\\
    \Embb^{\nu_i}[S_{i+1}|S_i, V_i] = S_i, \qquad \Embb^{\nu_i}\!\left[L\!\left(\frac{S_{i+1}}{S_i}\right)\middle| S_i, V_i\right] = V_i^2.
\end{gather}
\end{definition}

\begin{definition}[Full feasible set]\label{def:full_feasible}
Let $\cP_{\mathrm{full}}$ denote the set of probability measures $\mu$ on the state
\[
(S_1,V_1,S_2,V_2,\ldots,V_{m-1},S_m)
\]
satisfying
\begin{align}
    (C1) &\quad S_i \sim \mu_{S_i},\quad i=1,\ldots,m;
    \qquad V_i \sim \mu_{V_i},\quad i=1,\ldots,m-1, \label{eq:C1}\\
    (C2) &\quad \Embb^\mu[S_{i+1}\mid\cF_i]=S_i,
    \quad i=1,\ldots,m-1, \label{eq:C2}\\
    (C3) &\quad \Embb^\mu\!\left[
    L\!\left(\frac{S_{i+1}}{S_i}\right)\middle|\cF_i
    \right]=V_i^2,
    \quad i=1,\ldots,m-1, \label{eq:C3}
\end{align}
where $\cF_i=\sigma(S_1,V_1,\ldots,S_i,V_i)$.
\end{definition}

\begin{definition}[Stitched model]\label{def:stitched}
Given $\nu_i \in \cP_i$ for each $i = 1, \ldots, m-1$, the stitched model $\mu_{\mathrm{stitch}}$ on $\mathbb{R}_{>0}^{m} \times \mathbb{R}_{\geq 0}^{m-1}$ is defined recursively: $(S_1, V_1, S_2) \sim \nu_1$ and $\Law(V_{i+1}, S_{i+2} \mid S_1, V_1, \ldots, S_{i+1}) = \Law_{\nu_{i+1}}(V_{i+1}, S_{i+2} \mid S_{i+1})$ for $i = 1, \ldots, m-2$.
\end{definition}

The critical structural consequence is that the stitched model is \emph{Markovian in the SPX}: the future evolution of $S$ and $V$ depends on the past only through the current SPX value $S_{i+1}$.
All memory of the path, prior VIX levels, prior SPX values, the realized trajectory, is discarded at each stitching boundary.
For each month, \citeauthor{guyon2020joint} builds the calibrating model by solving $\inf_{\nu_i \in \cP_i}D_{\mathrm{KL}}(\nu_i\Vert\bar{\mu}_i)$ with a Sinkhorn-type algorithm; a natural reference uses the lognormal transition kernel
\begin{equation}\label{eq:lognormal_kernel}
    S_{i+1} \mid (S_i = s_i, V_i = v_i) \sim s_i \exp\!\left(v_i\sqrt{\tau}\,G - \tfrac{1}{2}v_i^2\tau\right), \qquad G \sim \mathcal{N}(0,1),
\end{equation}
which automatically satisfies the martingale and dispersion constraints.

\subsection{The Practical Seam under Independent Relaxed Calibration}\label{subsec:seam}

Under \cref{def:monthly_feasible}, adjacent \emph{exact} blocks prescribe the same marginal $\mu_{S_{i+1}}$ and therefore admit a consistent gluing.
There is no marginal-level seam in that idealized definition.
A common relaxed block implementation is different: each two-maturity problem is solved on $(S_i,V_i,Z_i)$ with quote moments imposed on its initial SPX and VIX axes, while the terminal SPX law is the induced transition pushforward and is not separately pinned to the next SPX smile.
A shared maturity $T_{i+1}$ then has two numerical representations, the terminal pushforward of block $i$ and the quote-calibrated initial axis of block $i+1$.
Before those representations are reconciled, the two candidate blocks do not define one exact stitched law.
It is useful to distinguish this practical seam from the structural Markov restriction.

\begin{itemize}
    \item \emph{Practical marginal seam.} The terminal pushforward of block $i$ may differ from the quote-calibrated initial marginal of block $i+1$.
    \item \emph{Structural Markov seam.} Even when the shared marginal is exactly common, stitching draws the next block conditionally only on $S_{i+1}$ and therefore discards dependence on the earlier path.
\end{itemize}

\begin{remark}[One shared representation in the global program]
\label{rem:seam_global}
The globally coupled model of \cref{subsec:global_model} uses a single shared SPX representation at each maturity for both the incoming and the outgoing transition.
It therefore removes the practical marginal seam by construction and does not impose the structural conditional-independence restriction.
\end{remark}

Such independently relaxed blocks need not form one law until their two representations of the shared SPX maturity are reconciled.
A global solve eliminates this duplication by carrying one $S_{i+1}$ variable through both adjacent transitions.
This practical point is separate from exact gluing: when both exact blocks prescribe the same $\mu_{S_{i+1}}$, \cref{thm:feasibility_equivalence} guarantees that they can be glued.

This practical distinction can be stated exactly when quotes are imposed as bands.
Let $\mathfrak M_{S_i}$ and $\mathfrak M_{V_i}$ denote the sets of one-dimensional laws whose selected option moments lie inside the corresponding bid--ask intervals, and define the local feasibility relation
\begin{equation}\label{eq:local_feasibility_relation}
\mathfrak R_i
:=
\left\{(\alpha_i,\beta_i,\alpha_{i+1})
\in\mathfrak M_{S_i}\times\mathfrak M_{V_i}\times\mathfrak M_{S_{i+1}}:
\cP(\alpha_i,\beta_i,\alpha_{i+1})\neq\emptyset\right\}.
\end{equation}
Let $\cP_{\mathrm{full}}^{\mathrm{band}}$ be the set of laws satisfying (C2)--(C3) whose one-dimensional marginals belong to these band-admissible sets.

\begin{proposition}[Compatibility under quote bands]\label{prop:band_compatibility}
The global band-constrained problem, $\cP_{\mathrm{full}}^{\mathrm{band}}\neq\emptyset$, is feasible if and only if there is a \emph{single} sequence of marginal laws
\[
(\alpha_1,\beta_1,\ldots,\beta_{m-1},\alpha_m)
\]
such that $(\alpha_i,\beta_i,\alpha_{i+1})\in\mathfrak R_i$ for every $i$.
Equivalently, the compatible fiber product of the relations $\mathfrak R_i$ over their shared SPX marginals is nonempty.
For $m\geq3$, checking only $\mathfrak R_i\neq\emptyset$ separately is not a certificate of global feasibility, because it does not verify that adjacent relations admit the same shared-marginal selection.
\end{proposition}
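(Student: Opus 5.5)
Necessity is a restriction argument and sufficiency is a gluing argument; the gluing follows the stitched construction of \cref{def:stitched} with band-admissible marginals in place of market marginals.

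\emph{Necessity.} Take $\mu\in\cP_{\mathrm{full}}^{\mathrm{band}}$. Set $\alpha_i:=\Law_\mu(S_i)$ and $\beta_i:=\Law_\mu(V_i)$; by definition of $\cP_{\mathrm{full}}^{\mathrm{band}}$ these lie in $\mathfrak M_{S_i}$ and $\mathfrak M_{V_i}$. Let $\nu_i:=\Law_\mu(S_i,V_i,S_{i+1})$. Conditions (C2)--(C3) hold given $\cF_i$, and $\sigma(S_i,V_i)\subseteq\cF_i$, so the tower property gives
\[
\Embb^{\nu_i}[S_{i+1}\mid S_i,V_i]=S_i,
\qquad
\Embb^{\nu_i}[L(S_{i+1}/S_i)\mid S_i,V_i]=V_i^2 .
\]
Integrability is inherited from $\mu$. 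Hence $\nu_i\in\cP(\alpha_i,\beta_i,\alpha_{i+1})$, and the single sequence $(\alpha_1,\beta_1,\ldots,\alpha_m)$ satisfies $(\alpha_i,\beta_i,\alpha_{i+1})\in\mathfrak R_i$ for all $i$. The middle SPX marginal $\alpha_{i+1}$ is common to blocks $i$ and $i+1$ because it is the law of the single variable $S_{i+1}$ under $\mu$.

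\emph{Sufficiency.} Given a compatible sequence, choose $\nu_i\in\cP(\alpha_i,\beta_i,\alpha_{i+1})$ for each $i$. Glue them as in \cref{def:stitched}: start with $(S_1,V_1,S_2)\sim\nu_1$, and draw $(V_{i+1},S_{i+2})$ from the regular conditional law $\Law_{\nu_{i+1}}(\,\cdot\mid S_{i+1})$. This step uses disintegration on Polish spaces and the fact that the $S_{i+1}$-marginal of $\nu_i$ equals the $S_{i+1}$-marginal of $\nu_{i+1}$, both being $\alpha_{i+1}$. By induction, the glued $\mu$ satisfies $\Law_\mu(S_i,V_i,S_{i+1})=\nu_i$ for every $i$, so all one-dimensional marginals are the prescribed band-admissible laws.

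The only point needing care is the conditional conditions relative to the full history $\cF_i$. Conditional on $\cF_i$, the pair $(V_i,S_{i+1})$ was drawn from $\Law_{\nu_i}(\,\cdot\mid S_i)$, independently of the earlier path given $S_i$. Hence $\Law_\mu(S_{i+1}\mid\cF_i)=\Law_{\nu_i}(S_{i+1}\mid S_i,V_i)$, and (C2)--(C3) given $\cF_i$ reduce to the block conditions of $\nu_i$. I expect this conditional-independence verification to be the main technical step. I would argue it by testing against bounded functions of $(S_1,\ldots,V_{i-1})$ times functions of $(S_i,V_i)$ and using the kernel representation. Alternatively, it follows from \cref{thm:feasibility_equivalence} applied with the $\alpha,\beta$ as marginals, since that result is stated for arbitrary marginals satisfying the block constraints.

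\emph{Equivalence and non-certification.} The fiber-product reformulation is a restatement of the above: the set of compatible sequences is exactly $\{(x_i)\colon (x_i,y_i,x_{i+1})\in\mathfrak R_i\ \forall i\}$.

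For the final claim with $m\ge3$, I would give a small counterexample. Let $\mathfrak M_{S_2}$ contain two disjoint candidate laws $a$ and $a'$, for instance two distinct Dirac-type laws at different levels, each inside a wide quote band. Arrange $\mathfrak M_{S_1}$ and $\mathfrak M_{V_1}$ so that block $1$ is feasible only with $\alpha_2=a$, using convex order with a degenerate $\alpha_1=\delta_{S_0}$ and $\beta_1$ chosen to match the dispersion consistency for $a$. Arrange $\mathfrak M_{S_3}$ and $\mathfrak M_{V_2}$ so that block $2$ is feasible only with $\alpha_2=a'$; convex order forces $\alpha_2$ to have mean $S_0$ and be dominated by $\alpha_3$, so choose constraints excluding $a$. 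Then $\mathfrak R_1\neq\emptyset$ and $\mathfrak R_2\neq\emptyset$, but no common $\alpha_2$ exists. Making the band sets concrete enough that each exclusion is rigorous, for example via the log-contract consistency identity \eqref{eq:consistency_global} pinning the $V$ second moment, is fiddly but elementary.
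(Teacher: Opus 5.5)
Your necessity and sufficiency arguments are essentially the paper's own proof. For necessity you restrict $\mu$ to the adjacent triples and apply the tower property. For sufficiency you disintegrate each later block on its first SPX coordinate, glue, and reuse the conditional-expectation step from the proof of \cref{thm:feasibility_equivalence}. The paper likewise invokes that argument rather than the theorem itself, which is stated for the prescribed market marginals, so your direct verification is the right route.

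For the final claim, the paper gives only the one-line observation that separate nonemptiness need not yield a common $\alpha_i$, so attempting an explicit counterexample goes beyond it. However, your sketched instance is inconsistent as written. If block~1 starts from $\alpha_1=\delta_{S_0}$, its $\alpha_2$ must have mean $S_0$. You also say block~2 forces $\alpha_2$ to have mean $S_0$. Two Dirac laws at different levels cannot both have mean $S_0$, so they cannot play the roles of $a$ and $a'$.

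A clean fix is as follows. Give both VIX futures a band $[0,0]$, which forces $V_1=V_2=0$. With $V_i=0$, the martingale row together with $-\ln x\ge 1-x$ (equality only at $x=1$) in the dispersion row forces $S_{i+1}=S_i$ a.s. Hence $\mathfrak R_i=\{(\alpha,\delta_0,\alpha):\alpha\in\mathfrak M_{S_i}\cap\mathfrak M_{S_{i+1}}\}$. Now pin every forward at $S_0=100$ and give the at-the-money calls the bands $[0,1]$, $[0,3]$, and $[2,3]$ at the three maturities. Then $\mathfrak R_1\ni(\delta_{S_0},\delta_0,\delta_{S_0})$ and $\mathfrak R_2\ni(\nu,\delta_0,\nu)$ with $\nu=\tfrac12\delta_{95}+\tfrac12\delta_{105}$, whose at-the-money call price is $2.5$. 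Yet no common $\alpha_2$ exists, because it would need an at-the-money call price in $[0,1]\cap[2,3]=\emptyset$.
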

The proof is given in \cref{app:proof_band_compatibility}.

Thus a jointly optimized global problem can succeed where a \emph{prescribed} sequential rule fails, for example when that rule fixes an interpolated shared marginal outside the compatibility region even though another marginal inside the quote band is compatible with both neighboring blocks.
It cannot succeed where every admissible version of an exact monthly problem is infeasible.
Likewise, introducing the same conditional tolerances on both sides does not reverse the implication: if $Y_i$ denotes either conditional pricing error, then
\[
\left\|\Embb^\mu[Y_i\mid S_i,V_i]\right\|_{L^p}
\leq
\left\|\Embb^\mu[Y_i\mid\cF_i]\right\|_{L^p},
\qquad 1\leq p\leq\infty,
\]
by conditional Jensen.
The stronger practical claim therefore concerns joint selection and minimum relaxation, not an enlargement of exact feasibility.

\subsection{Structural Limitations of Stitching}\label{subsec:stitching_limits}

The exact distinction is not feasibility but identification of the full path law.
Define
\begin{equation}\label{eq:stitched_class}
\cP_{\mathrm{stitch}}
:=
\left\{
\nu_1(ds_1,dv_1,ds_2)
\prod_{i=2}^{m-1}\nu_i(dv_i,ds_{i+1}\mid s_i)
:\ \nu_i\in\cP_i
\right\}.
\end{equation}
For $\mu\in\cP_{\mathrm{full}}$, let $\nu_i^\mu:=\Law_\mu(S_i,V_i,S_{i+1})$ and define its SPX-Markovization by
\begin{equation}\label{eq:markovization}
\mathsf M\mu
:=
\nu_1^\mu(ds_1,dv_1,ds_2)
\prod_{i=2}^{m-1}\mu(dv_i,ds_{i+1}\mid s_i).
\end{equation}

\begin{theorem}[Feasibility equivalence and block-preserving Markovization]
\label{thm:feasibility_equivalence}
Assume regular conditional distributions exist and adjacent blocks use the same prescribed shared SPX marginal.
For every $\mu\in\cP_{\mathrm{full}}$,
\begin{equation}\label{eq:block_preservation}
    \mathsf M\mu\in\cP_{\mathrm{stitch}}\subseteq\cP_{\mathrm{full}},
    \qquad
    \Law_{\mathsf M\mu}(S_i,V_i,S_{i+1})
    =\Law_\mu(S_i,V_i,S_{i+1})
\end{equation}
for every $i$.
Consequently,
\begin{equation}\label{eq:feasibility_equivalence}
\cP_{\mathrm{full}}\neq\emptyset
\quad\Longleftrightarrow\quad
\cP_{\mathrm{stitch}}\neq\emptyset
\quad\Longleftrightarrow\quad
\cP_i\neq\emptyset\ \text{for every }i.
\end{equation}
\end{theorem}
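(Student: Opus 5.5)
The plan is to prove the chain of containments and the block-preservation identity by induction along the time index, then read off the feasibility equivalences.

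\emph{Step 1: the Markovization lies in $\cP_{\mathrm{stitch}}$.} Fix $\mu\in\cP_{\mathrm{full}}$ and set $\nu_i:=\nu_i^\mu=\Law_\mu(S_i,V_i,S_{i+1})$. I first check that $\nu_i\in\cP_i$. The marginal constraints in \eqref{eq:monthly_constraints} are inherited from (C1). For the conditional constraints, $\sigma(S_i,V_i)\subseteq\cF_i$, so the tower property applied to (C2)--(C3) gives
\[
\Embb^\mu[S_{i+1}\mid S_i,V_i]=\Embb^\mu\bigl[\Embb^\mu[S_{i+1}\mid\cF_i]\mid S_i,V_i\bigr]=S_i,
\]
and the same argument applies to $L(S_{i+1}/S_i)$; integrability follows from \cref{ass:marginals}. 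The kernel $\mu(dv_i,ds_{i+1}\mid s_i)$ in \eqref{eq:markovization} is, by definition, the regular conditional law of $\nu_i^\mu$ given $S_i$. Hence $\mathsf M\mu$ has exactly the form \eqref{eq:stitched_class}, and $\mathsf M\mu\in\cP_{\mathrm{stitch}}$.

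\emph{Step 2: block preservation.} This is the key step, because it is where the shared-marginal hypothesis is used. I will show by induction on $i$ that $\Law_{\mathsf M\mu}(S_i,V_i,S_{i+1})=\nu_i^\mu$. The base case $i=1$ holds by construction. For the inductive step, suppose $\Law_{\mathsf M\mu}(S_{i},V_{i},S_{i+1})=\nu_{i}^\mu$. Then the $S_{i+1}$-marginal under $\mathsf M\mu$ is $\Law_\mu(S_{i+1})$. Under $\mathsf M\mu$ the block $(V_{i+1},S_{i+2})$ is drawn from the kernel of $\nu_{i+1}^\mu$ given $s_{i+1}$. Therefore
\[
\Law_{\mathsf M\mu}(S_{i+1},V_{i+1},S_{i+2})=\Law_\mu(S_{i+1})\otimes\mu(dv,ds\mid s_{i+1})=\nu_{i+1}^\mu
\]
by disintegration. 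The same computation applied to arbitrary $\nu_i\in\cP_i$ with common $\mu_{S_{i+1}}$ shows that every stitched law has $i$th block marginal equal to $\nu_i$. This gives (C1) for stitched laws, and this is precisely where the assumption that adjacent blocks share the same prescribed marginal enters.

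\emph{Step 3: $\cP_{\mathrm{stitch}}\subseteq\cP_{\mathrm{full}}$.} Let $\pi\in\cP_{\mathrm{stitch}}$. Given $\cF_i$, the conditional law of $S_{i+1}$ under $\pi$ depends only on $(S_i,V_i)$. For $i=1$ this holds because $(S_1,V_1,S_2)\sim\nu_1$. For $i\ge2$, the stitched construction first draws $(V_i,S_{i+1})$ given $S_i$ independently of the past, so conditioning further on $V_i$ leaves the dependence only on $(S_i,V_i)$; formally, $\pi(ds_{i+1}\mid\cF_i)=\nu_i(ds_{i+1}\mid s_i,v_i)$. I expect the main technical obstacle to be handling these regular conditional versions carefully, that is, verifying the identification $\pi(ds_{i+1}\mid\cF_i)=\nu_i(ds_{i+1}\mid s_i,v_i)$ $\pi$-a.s. through a test-function computation with bounded $g(S_1,\dots,V_i)$. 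Once this identification is in hand, (C2)--(C3) under $\cF_i$ reduce to the monthly constraints of $\nu_i$, and (C1) follows from Step 2.

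\emph{Step 4: the feasibility equivalences in \eqref{eq:feasibility_equivalence}.} If $\cP_{\mathrm{full}}\neq\emptyset$, then Step 1 gives $\cP_{\mathrm{stitch}}\neq\emptyset$. If $\cP_{\mathrm{stitch}}\neq\emptyset$, then Step 3 shows it lies inside $\cP_{\mathrm{full}}$, and since the product in \eqref{eq:stitched_class} is built from elements of the $\cP_i$, every $\cP_i$ is nonempty. Conversely, if every $\cP_i\neq\emptyset$, choose $\nu_i\in\cP_i$ and stitch them; Step 3 then yields an element of $\cP_{\mathrm{full}}$.
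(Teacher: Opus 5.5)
Your proposal is correct and follows essentially the same route as the paper. The tower property places $\nu_i^\mu$ in $\cP_i$, and induction over the shared SPX marginals gives block preservation. Identifying the $\cF_i$-conditional law of $S_{i+1}$ with the local kernel $\nu_i(ds_{i+1}\mid s_i,v_i)$ transfers (C2)--(C3) to the glued law, and gluing arbitrary $\nu_i\in\cP_i$ closes the cycle of equivalences. The only difference is organizational: you verify $\cP_{\mathrm{stitch}}\subseteq\cP_{\mathrm{full}}$ directly for an arbitrary stitched law with a test-function argument, whereas the paper does the kernel identification for $\mathsf M\mu$ and then appeals to ``the preceding argument'' for general gluings.
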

The proof is given in \cref{app:proof_feasibility_equivalence}.

\begin{remark}[Relation to mimicking]\label{rem:mimicking}
The map $\mu\mapsto\mathsf M\mu$ is a block-valued discrete-time Markovian projection.
It extends the elementary discrete-time Markovization that preserves adjacent pair laws by preserving the overlapping SPX--VIX blocks $(S_i,V_i,S_{i+1})$.
It is conceptually related to continuous-time mimicking results such as those of \citet{gyongy1986mimicking} and \citet{brunick2013mimicking}, but it is not a diffusion-projection theorem: here the result follows directly from disintegration and Markov gluing.
\end{remark}

\begin{theorem}[Strict law-class inclusion and local non-identification]
\label{thm:strict_inclusion}
For $m\geq3$, there exist marginals satisfying \cref{ass:marginals} for which
\begin{equation}
    \cP_{\mathrm{stitch}}\subsetneq\cP_{\mathrm{full}}.
\end{equation}
More precisely, there is a law $\mu\in\cP_{\mathrm{full}}$ with $\mu\neq\mathsf M\mu$ although all identities in~\eqref{eq:block_preservation} hold.
Thus every integrable block-local payoff $f_i(S_i,V_i,S_{i+1})$ has the same price under $\mu$ and $\mathsf M\mu$, while some bounded cross-period payoff $f$ has different prices under the two laws.
Moreover, $\mu=\mathsf M\mu$ if and only if
\begin{equation}\label{eq:stitched_ci}
    (V_i,S_{i+1})\ \perp\ \cF_{i-1}\mid S_i,
    \qquad i=2,\ldots,m-1,
\end{equation}
up to almost-sure equality of the corresponding kernels.
\end{theorem}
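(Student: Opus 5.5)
The proof has three parts: the conditional-independence characterization, an explicit finite-state example with $m=3$, and the pricing consequences. Block preservation is taken from \cref{thm:feasibility_equivalence}.

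\emph{Characterization.} By \eqref{eq:markovization}, $\mathsf M\mu$ and $\mu$ share the first block $\nu_1^\mu$. Disintegrate $\mu$ sequentially:
\[
\mu=\nu_1^\mu\prod_{i=2}^{m-1}\mu(dv_i,ds_{i+1}\mid s_1,v_1,\ldots,s_i).
\]
Here the conditioning is on $\cF_{i-1}\vee\sigma(S_i)$. If \eqref{eq:stitched_ci} holds, each of these kernels equals $\mu(dv_i,ds_{i+1}\mid s_i)$ almost surely, so $\mu=\mathsf M\mu$. Conversely, suppose $\mu=\mathsf M\mu$. Under $\mathsf M\mu$, the law of $(V_i,S_{i+1})$ given the past is the kernel $\mu(\cdot\mid s_i)$ by construction, so \eqref{eq:stitched_ci} holds under $\mathsf M\mu$ and therefore under $\mu$. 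This uses only uniqueness of regular conditional distributions.

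\emph{Example.} Take $m=3$ with finitely supported laws. The plan is to make $V_2$ depend on $V_1$ given $S_2$ while keeping every (C2)--(C3) constraint intact. Use a two-point $V_1\in\{a,b\}$ independent of $S_1\equiv S_0$, so that $S_1$ is degenerate. Choose transitions $S_2\mid V_1$ as finite martingale kernels with the prescribed log-contract value $V_1^2$. Arrange the supports of $S_2$ given $V_1=a$ and given $V_1=b$ to share a common atom $s^\ast$.

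From $s^\ast$, specify two different admissible continuations:
\begin{itemize}
\item after $V_1=a$, set $V_2=c$;
\item after $V_1=b$, set $V_2=d$, with $c\neq d$.
\end{itemize}
Each continuation is a three-point martingale kernel for $S_3$ from $s^\ast$ with $\Embb[L(S_3/s^\ast)]=V_2^2$. Such a kernel exists because three points in $\mathbb R_{>0}$ can match mean and log-mean whenever $V_2^2>0$, since $L$ is strictly convex. On all other atoms of $S_2$, the continuation depends only on $s_2$.

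Define $\mu$ as this path law. Then (C2)--(C3) hold conditional on full histories by construction. The marginals $\mu_{S_i},\mu_{V_i}$ are simply defined as the induced marginals. \cref{ass:marginals} then follows: convex order holds because the path is a martingale, consistency holds by the tower property, and integrability is immediate from finite support. So $\mu\in\cP_{\mathrm{full}}$.

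On the event $\{S_2=s^\ast\}$, the pair $(V_1,V_2)$ is $(a,c)$ or $(b,d)$ with positive probability for each. This violates \eqref{eq:stitched_ci} at $i=2$, so $\mu\neq\mathsf M\mu$ by the characterization. Since $\mathsf M\mu\in\cP_{\mathrm{stitch}}\subseteq\cP_{\mathrm{full}}$ by \cref{thm:feasibility_equivalence}, strict inclusion follows. Indeed, $\mu\notin\cP_{\mathrm{stitch}}$ because every element of $\cP_{\mathrm{stitch}}$ satisfies \eqref{eq:stitched_ci}, as shown above.

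\emph{Prices.} Block-local payoffs have equal prices because the block laws coincide by \eqref{eq:block_preservation}. For a cross-period payoff, take $f=\mathbf 1\{V_1=a,\,S_2=s^\ast,\,V_2=c\}$. Its price under $\mu$ is $P(V_1=a,S_2=s^\ast)$. Under $\mathsf M\mu$ it is that probability multiplied by $P(V_2=c\mid S_2=s^\ast)<1$, so the two prices differ.

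The main obstacle is the explicit numeric choice of atoms. They must satisfy the exact log-contract equalities while making the two $S_2$ supports overlap at $s^\ast$. The approach is to pick $s^\ast=S_0$ and symmetric two-point moves otherwise, then solve the scalar equation for each $V$ value.
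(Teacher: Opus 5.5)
Your proposal is correct and follows essentially the same route as the paper: an $m=3$ finite tree in which both $V_1$ branches reach the common atom $S_2=S_0$ through symmetric martingale moves, with $V_2$ recording the branch there; block preservation is taken from \cref{thm:feasibility_equivalence}, an atom indicator separates $\mu$ from $\mathsf M\mu$, and the characterization compares full-history kernels with $S_i$-conditional kernels. The one omission is the case $m>3$, which the paper handles by appending degenerate feasible periods (for example $V_i=0$, $S_{i+1}=S_i$); this keeps the violation of \eqref{eq:stitched_ci} at $i=2$ intact.
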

The proof and finite-state construction are given in \cref{app:proof_strict_inclusion}.

\begin{proposition}[Information removed by stitching]\label{prop:information_loss}
Let $\mu\in\cP_{\mathrm{full}}$ and write the pre-$S_i$ history as
\[
H_{i-1}:=(S_1,V_1,\ldots,S_{i-1},V_{i-1}).
\]
Suppose $D_{\mathrm{KL}}(\mu\Vert\mathsf M\mu)<\infty$.
Then
\begin{equation}\label{eq:information_loss}
D_{\mathrm{KL}}(\mu\Vert\mathsf M\mu)
=
\sum_{i=2}^{m-1}
I_\mu\!\left((V_i,S_{i+1});H_{i-1}\mid S_i\right).
\end{equation}
In particular, the information loss is zero exactly when the stitched conditional-independence relations~\eqref{eq:stitched_ci} hold.
\end{proposition}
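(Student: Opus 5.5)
The plan is to disintegrate both laws along the merged timeline and apply the chain rule for relative entropy. Write the path as $(S_1,V_1,S_2)$ followed by the blocks $B_i:=(V_i,S_{i+1})$ for $i=2,\ldots,m-1$. By \cref{thm:feasibility_equivalence} (or directly from \eqref{eq:markovization}), $\mathsf M\mu$ agrees with $\mu$ on the initial block $(S_1,V_1,S_2)$, and its kernel for $B_i$ given the full past $(H_{i-1},S_i)$ is $\mu(dv_i,ds_{i+1}\mid s_i)$. This kernel depends only on $s_i$. By contrast, under $\mu$ the kernel is $\mu(dv_i,ds_{i+1}\mid H_{i-1},S_i)$. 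Here the past $\cF_{i-1}\vee\sigma(S_i)$ is generated exactly by $(H_{i-1},S_i)$.

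The chain rule for KL under regular conditional distributions then gives
\[
D_{\mathrm{KL}}(\mu\Vert\mathsf M\mu)
=D_{\mathrm{KL}}(\nu_1^\mu\Vert\nu_1^\mu)
+\sum_{i=2}^{m-1}\Embb^\mu\!\left[D_{\mathrm{KL}}\!\left(\mu(\cdot\mid H_{i-1},S_i)\,\Vert\,\mu(\cdot\mid S_i)\right)\right].
\]
The first term vanishes. Each summand is, by definition, the conditional mutual information $I_\mu(B_i;H_{i-1}\mid S_i)$. Equivalently, it is $\Embb^\mu$ of $D_{\mathrm{KL}}$ between $\Law_\mu(B_i,H_{i-1}\mid S_i)$ and the product $\Law_\mu(B_i\mid S_i)\otimes\Law_\mu(H_{i-1}\mid S_i)$. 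This yields \eqref{eq:information_loss}.

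The final claim follows because each term is nonnegative. Since the total is finite, the sum is zero iff every term is zero. $I_\mu(B_i;H_{i-1}\mid S_i)=0$ iff $\mu(\cdot\mid H_{i-1},S_i)=\mu(\cdot\mid S_i)$ $\mu$-a.s., which is exactly \eqref{eq:stitched_ci} in the kernel sense used in \cref{thm:strict_inclusion}.

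The main obstacle is justifying the chain rule in full generality rather than on finite supports. We must check that finiteness of the total KL makes the log-density decompose as a sum of conditional log-densities, with integrability allowing the expectation to be split term by term. The plan is as follows:
\begin{itemize}
\item Use the hypothesis $D_{\mathrm{KL}}(\mu\Vert\mathsf M\mu)<\infty$ to get $\mu\ll\mathsf M\mu$.
\item Write $d\mu/d\mathsf M\mu=\prod_i \rho_i$, where $\rho_i$ is the density of $\mu(\cdot\mid H_{i-1},S_i)$ with respect to $\mu(\cdot\mid S_i)$, a measurable version obtained by successive disintegration.
\item Invoke the standard chain rule (e.g.\ via the Donsker--Varadhan representation, or by induction on $m$ using the two-step chain rule for Polish spaces). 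Since all terms are nonnegative, monotone convergence handles the splitting without needing separate integrability of each $\log\rho_i$.
\end{itemize}
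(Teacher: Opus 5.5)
Your proposal is correct and follows the paper's own proof. The paper likewise disintegrates $\mu$ and $\mathsf M\mu$ chronologically, notes that they share the first block and differ only by replacing $\mu(dv_i,ds_{i+1}\mid H_{i-1},s_i)$ with $\mu(dv_i,ds_{i+1}\mid s_i)$, applies the relative-entropy chain rule, identifies each summand as $I_\mu((V_i,S_{i+1});H_{i-1}\mid S_i)$, and uses nonnegativity for the zero-loss characterization. Your extra measure-theoretic care goes beyond what the paper supplies, with one loose phrase: the splitting is not really a monotone-convergence argument, since the pointwise terms $\log\rho_i$ need not be nonnegative. It holds because each conditional log-density has $\mu$-integrable negative part (expectation at most $1/e$), so every term is quasi-integrable and the expectation is additive.
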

The proof is given in \cref{app:proof_information_loss}.

The implication for calibration is a negative-identification result.
No collection of payoffs measurable within the individual blocks can test the Markov restriction, because $\mu$ and $\mathsf M\mu$ agree on every such payoff.
Stitching therefore does more than assemble local solutions: it completes the unidentified temporal dependence by imposing~\eqref{eq:stitched_ci}.
This can alter VIX persistence, joint tails, stress exposures, hedge sensitivities, and prices of products spanning several volatility windows.

The exact finite-tree construction illustrates the effect.
Markovization attenuates $\Cov(V_1,V_2)$ from $0.040$ to $0.011$, a $72\%$ reduction, and changes the price of the VIX-spread call $(V_2-V_1-K)^+$ from $(0.050,0,0,0)$ to $(0.113,0.063,0.045,0.027)$ at strikes $K=(0,0.1,0.2,0.3)$.
Both laws satisfy the same local constraints and have the same adjacent triple laws; the price difference is entirely due to the cross-period gluing.

More generally, for compatible local laws $\nu_1,\ldots,\nu_{m-1}$ define the set of global gluings
\begin{equation}\label{eq:gluing_class}
\Gamma(\nu_1,\ldots,\nu_{m-1})
:=
\left\{\mu\in\cP_{\mathrm{full}}:
\Law_\mu(S_i,V_i,S_{i+1})=\nu_i,
\ i=1,\ldots,m-1\right\}.
\end{equation}
For a cross-period payoff $f$, the interval
\begin{equation}\label{eq:robust_bounds}
\underline p(f):=\inf_{\mu\in\Gamma}\Embb^\mu[f],
\qquad
\overline p(f):=\sup_{\mu\in\Gamma}\Embb^\mu[f]
\end{equation}
measures dependence risk left unidentified by all monthly calibration instruments.
The stitched law supplies one point in this interval, not a model-independent price.
Computing such bounds for market portfolios is left to future work; the finite-tree example establishes that the interval can be non-degenerate.

\begin{figure}[ht]
\centering
\includegraphics[width=\textwidth]{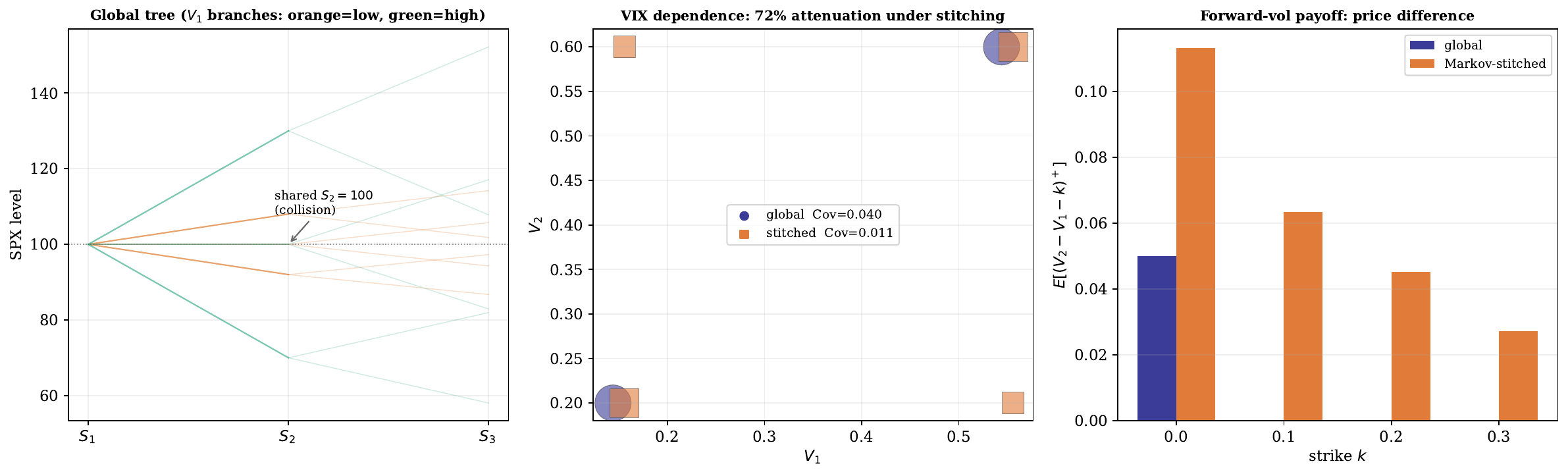}
\caption{Block-preserving Markovization on the exact finite-tree example of \cref{thm:strict_inclusion}.
The global law and its stitched Markovization have identical adjacent calibration blocks, but the covariance $\Cov(V_1,V_2)$ falls from $0.040$ to $0.011$ and the prices of VIX-spread calls change.
The discrepancy is a pure loss of cross-period information, not a feasibility difference.}
\label{fig:markov_gap}
\end{figure}

\subsection{The Globally Coupled Model}\label{subsec:global_model}

The global model calibrates one law over the entire chain and imposes the martingale and dispersion identities with respect to the full history $\cF_i$ in \cref{def:full_feasible}.
It does not impose the seam-wise conditional independence~\eqref{eq:stitched_ci}; the stitched class is recovered as the subset generated by the Markovization~\eqref{eq:markovization}.
This additional path-law freedom does not change exact local feasibility, but it is necessary for representing cross-period information and for computing dependence-sensitive model-risk bounds.
Let $\Omega$ denote the admissible path space, or the allowed set of cells in a finite discretization.

\begin{definition}[Full trading universe]\label{def:full_trading}
Let $\cU_{\mathrm{full}}$ denote the set of all portfolios
\[
\Pi = \bigl(\{u_{S_i}\}_{i=1}^{m},\; \{u_{V_i}\}_{i=1}^{m-1},\; \{\Delta_{S,i}\}_{i=1}^{m-1},\; \{\Delta_{L,i}\}_{i=1}^{m-1}\bigr)
\]
where $u_{S_i}$, $u_{V_i}$ are integrable vanilla payoffs and $\Delta_{S,i}$, $\Delta_{L,i}$ are bounded $\cF_i$-measurable dynamic trading strategies.
Its initial cost is
\[
\mathrm{Cost}(\Pi)
:=
\sum_{i=1}^{m}\int u_{S_i}\,d\mu_{S_i}
+\sum_{i=1}^{m-1}\int u_{V_i}\,d\mu_{V_i},
\]
because the dynamic gains have zero initial cost.
The portfolio terminal value is
\begin{multline}\label{eq:portfolio_value}
    \Pi(\bs, \bv) = \sum_{i=1}^{m} u_{S_i}(s_i) + \sum_{i=1}^{m-1} u_{V_i}(v_i) \\
    + \sum_{i=1}^{m-1} \Delta_{S,i}(\bs^i, \bv^i)(s_{i+1} - s_i) + \sum_{i=1}^{m-1} \Delta_{L,i}(\bs^i, \bv^i)\!\left(L\!\left(\frac{s_{i+1}}{s_i}\right) - v_i^2\right).
\end{multline}
A \emph{global $(S,V)$-arbitrage} is a portfolio $\Pi \in \cU_{\mathrm{full}}$ with $\Pi(\bs,\bv) \geq 0$ for every $(\bs,\bv)\in\Omega$ and $\mathrm{Cost}(\Pi) < 0$.
\end{definition}

Given a reference measure $\bar{\mu}$, the entropic calibration problem is
\begin{equation}\label{eq:entropic_full}
    D_{\bar{\mu}}^{\mathrm{full}} := \inf_{\mu \in \cP_{\mathrm{full}}}D_{\mathrm{KL}}(\mu\Vert\bar{\mu}).
\end{equation}
On a finite grid and under the usual constraint qualification, the optimizer has the normalized exponential form
\begin{multline}
\frac{d\mu^*}{d\bar\mu}(\bs,\bv)
=\frac{1}{Z}\exp\!\Biggl(
\sum_i u^*_{S_i}(s_i)+\sum_i u^*_{V_i}(v_i)\\
+\sum_i\Delta^*_{S,i}(\bs^i,\bv^i)(s_{i+1}-s_i)
+\sum_i\Delta^*_{L,i}(\bs^i,\bv^i)
\left[L(s_{i+1}/s_i)-v_i^2\right]
\Biggr),
\end{multline}
where $Z$ normalizes the law and the dynamic multipliers may depend on the full history.

\subsection{Finite-State Duality and Arbitrage-Freeness}\label{subsec:duality}

The easy direction of the no-arbitrage statement is immediate: any $\mu\in\cP_{\mathrm{full}}$ prices every portfolio in \cref{def:full_trading} at its initial cost and therefore rules out the stated arbitrage.
We state the converse on the finite state spaces used by the numerical checks, where it is a direct alternative theorem for linear systems.
Extending that converse to an unbounded continuum path space requires a separate specification of admissible payoff growth, topology, and closedness; we do not claim that extension here.

\begin{theorem}[Finite-state arbitrage duality]\label{thm:duality}
Every law in $\cP_{\mathrm{full}}$ rules out global $(S,V)$-arbitrage.
Conversely, on a finite state space, if the trading universe contains the static and dynamic payoffs dual to every row defining (C1)--(C3), absence of global $(S,V)$-arbitrage implies $\cP_{\mathrm{full}}\neq\emptyset$.
\end{theorem}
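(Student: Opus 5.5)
The proof splits into the two directions of the statement.

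\emph{Easy direction.} Fix $\mu\in\cP_{\mathrm{full}}$ and a portfolio $\Pi\in\cU_{\mathrm{full}}$ with $\Pi\geq0$ on $\Omega$. Take expectations of \eqref{eq:portfolio_value} under $\mu$.
\begin{itemize}
\item The static terms integrate to $\mathrm{Cost}(\Pi)$ by (C1), since the vanillas are integrable against the prescribed marginals.
\item For each dynamic term, condition on $\cF_i$. Since $\Delta_{S,i}$ and $\Delta_{L,i}$ are bounded and $\cF_i$-measurable, they factor out of the conditional expectation.
\item What remains is $\Embb^\mu[S_{i+1}-S_i\mid\cF_i]$ and $\Embb^\mu[L(S_{i+1}/S_i)-V_i^2\mid\cF_i]$. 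These vanish by (C2) and (C3), using the integrability supplied by \cref{ass:marginals}.
\end{itemize}
Hence $0\leq\Embb^\mu[\Pi]=\mathrm{Cost}(\Pi)$, because $\mu$ is supported on $\Omega$. So no portfolio has $\mathrm{Cost}(\Pi)<0$, and no global arbitrage exists.

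\emph{Converse on a finite state space.} Write feasibility as a linear system in the vector $p=(p_\omega)_{\omega\in\Omega}\geq0$. There is one row for each atom of each marginal in (C1), with right-hand side the market weight. (C2) and (C3) become homogeneous rows: for each $i$ and each atom $h$ of the history $(\bs^i,\bv^i)$, we require $\sum_{\omega:\,\text{history}=h}p_\omega(s_{i+1}-s_i)=0$, and the analogous equation with $L(s_{i+1}/s_i)-v_i^2$. Normalization is implied by any one marginal row. Call the system $Ap=b$, $p\geq0$.

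By Farkas' lemma, either this system has a solution, or there is $y$ with $A^\top y\geq0$ componentwise and $b^\top y<0$. We now read $y$ as a portfolio.
\begin{itemize}
\item The components on marginal rows define functions $u_{S_i}$ and $u_{V_i}$ on the finite grids.
\item The components on conditional rows define $\Delta_{S,i}(h)$ and $\Delta_{L,i}(h)$ as functions of the history atom. These are automatically $\cF_i$-measurable, and bounded because the space is finite.
\end{itemize}
Then $(A^\top y)_\omega$ is exactly $\Pi(\omega)$ from \eqref{eq:portfolio_value}, and $b^\top y=\mathrm{Cost}(\Pi)$. The second Farkas alternative is therefore precisely a global $(S,V)$-arbitrage. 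This portfolio lies in the trading universe because, by hypothesis, $\cU_{\mathrm{full}}$ contains the payoffs dual to every row. So no arbitrage excludes the second alternative, Farkas gives $p\geq0$ with $Ap=b$, and normalizing $p$ yields a law in $\cP_{\mathrm{full}}$.

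\emph{Main obstacle.} The step needing most care is matching the sign and row conventions so that $A^\top y$ coincides exactly with the portfolio payoff evaluated only on admissible cells $\Omega$. Conditional rows must be indexed by history atoms, so that each multiplier is a genuine $\cF_i$-measurable strategy. One must also handle marginal atoms that carry zero weight or fall outside $\Omega$. If equality-versus-inequality conventions cause trouble, I would first use the form of Farkas for $\{Ap=b,\ p\geq0\}$ with $y$ free, which gives exactly the needed signs.
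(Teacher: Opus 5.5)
Your proposal is correct and follows essentially the same route as the paper. For the easy direction, taking $\mu$-expectations lets (C1) price the static legs at cost while (C2)--(C3) kill the bounded predictable gains. For the converse, Farkas' lemma is applied to the finite system $Ap=b$, $p\ge 0$, and the dual vector is read as a portfolio. Indexing the conditional rows by history atoms is just the explicit form of the paper's remark that testing against all bounded $\cF_i$-measurable strategies is equivalent to (C2)--(C3). One minor wording point: normalization follows from summing all the rows of one marginal family, not from a single row.
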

The proof is given in \cref{app:proof_duality}.

\begin{theorem}[The entropic model is globally arbitrage-free]\label{thm:entropic_af}
Suppose $\cP_{\mathrm{full}}$ is weakly closed and there exists $\mu_0\in\cP_{\mathrm{full}}$ with $D_{\mathrm{KL}}(\mu_0\Vert\bar\mu)<\infty$.
If a relative-entropy sublevel containing a minimizing sequence is weakly compact, then the minimizer
$\mu^*=\argmin_{\mu\in\cP_{\mathrm{full}}}D_{\mathrm{KL}}(\mu\Vert\bar\mu)$ exists and is unique.
As a feasible law, it rules out global arbitrage.
On the finite grids used below, these existence conditions reduce to positivity of the reference on the allowed cells and nonemptiness of the exact constraint set.
\end{theorem}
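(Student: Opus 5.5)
The plan is the direct method of the calculus of variations, followed by strict convexity for uniqueness and an appeal to \cref{thm:duality} for arbitrage-freeness. First I will note that the infimum $D^{\mathrm{full}}_{\bar\mu}$ is finite, since $0\le D_{\mathrm{KL}}(\mu_0\Vert\bar\mu)<\infty$. Next I take a minimizing sequence $(\mu_n)\subset\cP_{\mathrm{full}}$. Discarding finitely many terms, it lies in a sublevel set $\{D_{\mathrm{KL}}(\cdot\Vert\bar\mu)\le c\}$, which is weakly compact by hypothesis. So a subsequence converges weakly to some $\mu^*$. Because $\cP_{\mathrm{full}}$ is weakly closed, $\mu^*\in\cP_{\mathrm{full}}$. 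Relative entropy is weakly lower semicontinuous in its first argument; the cleanest justification is the Donsker--Varadhan formula $D_{\mathrm{KL}}(\mu\Vert\bar\mu)=\sup_{g\in C_b}\{\int g\,d\mu-\log\int e^g\,d\bar\mu\}$, a supremum of weakly continuous functionals. Hence $D_{\mathrm{KL}}(\mu^*\Vert\bar\mu)\le\liminf_n D_{\mathrm{KL}}(\mu_n\Vert\bar\mu)=D^{\mathrm{full}}_{\bar\mu}$, and $\mu^*$ is a minimizer.

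For uniqueness I first check that $\cP_{\mathrm{full}}$ is convex. (C1) is affine in $\mu$. (C2)--(C3) can be rewritten as unconditional identities $\Embb^\mu[h\,(S_{i+1}-S_i)]=0$ and $\Embb^\mu[h\,(L(S_{i+1}/S_i)-V_i^2)]=0$ for all bounded $\cF_i$-measurable $h$, which are linear in $\mu$. The required integrability comes from \cref{ass:marginals}: the log moments and $\Embb[V_i^2]<\infty$ are fixed by the marginals. Now suppose $\mu_1\neq\mu_2$ are both minimizers with finite value. Strict convexity of $x\mapsto x\log x$ applied to the densities gives $D_{\mathrm{KL}}(\tfrac12(\mu_1+\mu_2)\Vert\bar\mu)<D^{\mathrm{full}}_{\bar\mu}$, since the densities differ on a set of positive $\bar\mu$-measure. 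The midpoint lies in $\cP_{\mathrm{full}}$, so this is a contradiction.

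Arbitrage-freeness then follows from the easy direction of \cref{thm:duality}. The minimizer $\mu^*$ is feasible and therefore prices every $\Pi\in\cU_{\mathrm{full}}$ at its initial cost, $\Embb^{\mu^*}[\Pi]=\mathrm{Cost}(\Pi)$: the dynamic gains have zero conditional mean by (C2)--(C3) together with boundedness of the $\Delta$'s, and the vanilla terms integrate against the marginals by (C1). So $\Pi\ge0$ on $\Omega$ forces $\mathrm{Cost}(\Pi)\ge0$. This uses $\mu^*(\Omega)=1$, which holds because $\mu^*\ll\bar\mu$ and $\bar\mu$ is supported on $\Omega$.

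For the finite-grid reduction, the space of laws on finitely many cells is a compact simplex. The constraint set is the intersection of the simplex with finitely many linear equalities, hence closed. $D_{\mathrm{KL}}(\cdot\Vert\bar\mu)$ is continuous, and finite everywhere when $\bar\mu>0$ on all allowed cells. Every hypothesis then reduces to nonemptiness of $\cP_{\mathrm{full}}$. The main technical point I expect is the continuum lower-semicontinuity and convexity bookkeeping: conditional constraints must be encoded as linear moment conditions with integrable test functions. On the finite grid this obstacle disappears.
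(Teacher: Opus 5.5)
Your proposal is correct and follows the paper's proof: existence by the direct method (weak compactness of the entropy sublevel, weak closedness of $\cP_{\mathrm{full}}$, lower semicontinuity of relative entropy), uniqueness by strict convexity of relative entropy on the convex set $\cP_{\mathrm{full}}$, arbitrage-freeness from the easy direction of \cref{thm:duality}, and the finite-grid reduction via compactness of the simplex and a strictly positive reference. The only difference is that you supply details the paper leaves implicit: Donsker--Varadhan for lower semicontinuity, the linear unconditional reformulation of (C2)--(C3) for convexity, and the check that $\mu^*(\Omega)=1$.
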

The proof is given in \cref{app:proof_entropic_af}.

The chain of implications is
\begin{equation}\label{eq:chain}
    \cP_{\mathrm{stitch}} \neq \emptyset
    \;\;\Longleftrightarrow\;\;
    \cP_{\mathrm{full}} \neq \emptyset
    \;\;\Longleftrightarrow\;\;
    \text{No global arbitrage}
\end{equation}
on a finite state space under the payoff-completeness condition of \cref{thm:duality}.
The equality concerns nonemptiness, not equality of law classes; \cref{thm:strict_inclusion} shows that the latter can be strict.

\begin{table}[ht]
\centering
\caption{Structural comparison of the stitching and globally coupled approaches.}
\label{tab:comparison}
\small
\begin{tabular}{@{}p{4.2cm}p{4.8cm}p{4.8cm}@{}}
\toprule
\textbf{Aspect} & \textbf{Stitching} & \textbf{Globally Coupled} \\
\midrule
Dimensionality per subproblem & 3-dimensional $(S_i, V_i, S_{i+1})$ & $(2m{-}1)$-dimensional full vector \\
\addlinespace
Markov structure & Conditional independence at each SPX seam & No seam-wise Markov assumption \\
\addlinespace
Delta strategies at $T_i$ & $\Delta(s_i, v_i)$ & $\Delta(\cF_i)$ (full history) \\
\addlinespace
Shared SPX maturity & Common in the exact gluing; duplicated in independent relaxed solves & Single shared representation \\
\addlinespace
Joint $(V_i, V_j)$ distribution & Indirect: $V_i \to S_{i+1} \to V_{i+1}$ & Directly modeled \\
\addlinespace
Exact nonemptiness & Equivalent to global nonemptiness & Equivalent to all monthly problems being nonempty \\
\addlinespace
Maturity misalignment & Requires interpolation or endogenous boundary laws & Supports free nodes; overlapping windows require additional care \\
\bottomrule
\end{tabular}
\end{table}

\subsection{From Modeling to Algorithm: a Single Coupling of Competing Constraints}\label{subsec:handoff}

The global formulation can expose a strictly larger class of path laws (\cref{thm:strict_inclusion,prop:information_loss}) and carries one shared representation through the full chain (\cref{subsec:seam,rem:seam_global}).
It is the appropriate state space for cross-period targets, dependence-sensitive price bounds, and free SPX nodes at misaligned dates.
Its computational price is that all quote, martingale, and dispersion rows act on one high-dimensional tensor.

On finite supports, discretization, quadrature restrictions, and active-row choices can create substantial constraint tension even though exact continuum nonemptiness is equivalent.
The finite-state checks do not claim a general impossibility of exact finite-grid calibration; instead they isolate how the tested projection schemes allocate residuals when a finite affine system is incompatible.
The hard/soft method of \cref{sec:algorithms} assigns priority to observable quote moments and reports the remaining structural residual explicitly.

\section{The Globally Coupled Reference Measure}\label{sec:reference}

The entropic formulation~\eqref{eq:entropic_full} requires a reference measure $\bar{\mu}$ and a feasible law of finite relative entropy with respect to it.
A product-of-marginals reference satisfies the input marginals but generally violates the conditional constraints.
Conversely, a Markov transition reference can satisfy the conditional identities at initialization but does not encode non-Markov temporal information.
We build $\bar\mu$ so that (C2)--(C3) hold at initialization; later quote-row corrections can disturb them, so the numerical method continues to reconcile both families.

\paragraph{Construction.} The reference interleaves, chronologically, lognormal SPX transition kernels with independent draws of the VIX marginals:
\begin{equation}\label{eq:complete_ref}
    \bar{\mu}(d\mathbf{s}, d\mathbf{v}) = \mu_{S_1}(ds_1) \prod_{i=1}^{m-1}\!\Big[\mu_{V_i}(dv_i) \cdot T_i(s_i, v_i, ds_{i+1})\Big],
\end{equation}
where $T_i$ is the lognormal kernel~\eqref{eq:lognormal_kernel}.
The kernel makes $\Embb^{\bar\mu}[S_{i+1}\mid\cF_i]=S_i$ and $\Embb^{\bar\mu}[L(S_{i+1}/S_i)\mid\cF_i]=V_i^2$ hold exactly at initialization, and the independent draws give $V_i\sim\mu_{V_i}$ under $\bar\mu$.
The subsequent marginal corrections generally alter more than one axis, so all calibrated marginal rows must remain in the solver.

\begin{proposition}[Reference-measure properties]\label{prop:ref_props}
The measure $\bar\mu$ in~\eqref{eq:complete_ref} is well-defined whenever the displayed factors are probability measures, satisfies (C2)--(C3), and matches the $S_1$ and VIX marginals at initialization.
Its support is the support induced by $\mu_{S_1}$, the VIX marginals, and the transition kernels.
On a finite grid, positivity of every allowed reference cell implies that every law on that grid is dominated by $\bar\mu$; in the continuum, existence of a feasible finite-entropy law must be assumed separately.
\end{proposition}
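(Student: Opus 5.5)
The argument verifies the four claims of \cref{prop:ref_props} in order, using only the product-kernel structure of~\eqref{eq:complete_ref} and the moment properties of the lognormal kernel~\eqref{eq:lognormal_kernel}.

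\emph{Well-definedness.} Each factor $T_i(s_i,v_i,ds_{i+1})$ is a probability kernel: it is the pushforward of $\mathcal N(0,1)$ under $g\mapsto s_i\exp(v_i\sqrt\tau\,g-\tfrac12 v_i^2\tau)$, which is jointly measurable in $(s_i,v_i,g)$. Each $\mu_{V_i}(dv_i)$ can be read as a constant kernel. The Ionescu--Tulcea theorem (on a finite grid, simply iterated multiplication) then makes the chronological product a probability measure on $\mathbb R_{>0}^m\times\mathbb R_{\ge0}^{m-1}$. Positivity of $s_{i+1}$ is automatic, so $L(S_{i+1}/S_i)$ is always defined.

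\emph{Conditional identities and marginals.} Under $\bar\mu$, the conditional law of $S_{i+1}$ given $\cF_i=\sigma(S_1,V_1,\dots,S_i,V_i)$ is $T_i(S_i,V_i,\cdot)$. This follows from the product structure by the standard disintegration of a kernel product. The two conditional moments then reduce to Gaussian computations. For (C2), $\Embb[s e^{v\sqrt\tau G-v^2\tau/2}]=s$. For (C3), $\Embb[-\tfrac2\tau(v\sqrt\tau G-\tfrac12v^2\tau)]=v^2$. Integrability holds for every fixed $(s,v)$, so these identities hold $\bar\mu$-a.s. The marginals follow by integrating out later coordinates, which is legitimate because every later factor is a probability kernel. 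Hence $S_1\sim\mu_{S_1}$. Moreover, $V_i$ is drawn independently of $\cF_{i-1}\vee\sigma(S_i)$ with law $\mu_{V_i}$, so $V_i\sim\mu_{V_i}$. The marginals of $S_i$ for $i\ge2$ are not claimed, and we should note that they generally differ from $\mu_{S_i}$.

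\emph{Support and domination.} The support statement is descriptive. The set of reachable paths is generated by $\operatorname{supp}\mu_{S_1}$, $\operatorname{supp}\mu_{V_i}$, and the supports of the kernels. In the continuum each kernel has full support on $\mathbb R_{>0}$ when $v_i>0$ and is the point mass at $s_i$ when $v_i=0$. On a finite grid $\Omega$, if $\bar\mu(\{\omega\})>0$ for every allowed cell, then any law $\mu$ on $\Omega$ satisfies $\mu(A)=0$ whenever $\bar\mu(A)=0$, because the only such $A$ is $\emptyset$. Thus $\mu\ll\bar\mu$, and the density $\mu(\omega)/\bar\mu(\omega)$ is finite. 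For the continuum we only need the remark that no such conclusion is claimed, so the existence of a feasible finite-entropy law is stated as an assumption, as in \cref{thm:entropic_af}.

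\emph{Main obstacle.} The only delicate point is the meaning of the conditional expectations in (C2)--(C3) on a discretized grid. There the lognormal kernel is replaced by a quadrature on finitely many nodes, and the exact identities hold only if the quadrature nodes and weights are chosen to match the two moments. I would state the proposition for the exact kernel and remark that a moment-matched discretization preserves the identities cell by cell.
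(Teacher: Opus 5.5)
Your proposal is correct and follows essentially the same route as the paper's proof: the chronological kernel product is a probability measure, the two Gaussian moment identities of the lognormal kernel hold conditional on the full history because $T_i$ depends only on $(s_i,v_i)$, each $V_i$ is an independent draw from $\mu_{V_i}$, and strictly positive cell weights give domination on the grid. Your extra remarks (full support of the kernel when $v_i>0$, and the need for moment-matched quadrature on a discretized grid) are sound additions, while the paper justifies the continuum caveat by noting that full support does not imply absolute continuity.
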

The proof is given in \cref{app:proof_entropic_af}.

\begin{proposition}[The standard reference selects the stitched completion]
\label{prop:kl_markovization}
Suppose $\bar\mu$ factorizes through the shared SPX states as in~\eqref{eq:complete_ref}, and let $\mu\in\cP_{\mathrm{full}}$ have finite relative entropy with respect to $\bar\mu$.
Then
\begin{equation}\label{eq:kl_markovization}
D_{\mathrm{KL}}(\mu\Vert\bar\mu)
=
D_{\mathrm{KL}}(\mathsf M\mu\Vert\bar\mu)
+D_{\mathrm{KL}}(\mu\Vert\mathsf M\mu).
\end{equation}
Consequently, every finite-entropy minimizer over $\cP_{\mathrm{full}}$ belongs to $\cP_{\mathrm{stitch}}$.
\end{proposition}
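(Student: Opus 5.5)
The plan is to prove \eqref{eq:kl_markovization} with the chain rule for relative entropy along the chronological disintegration. This keeps every term nonnegative and avoids $\infty-\infty$ issues that would come from manipulating raw log-densities. Write $\kappa_i(s_i;dv_i,ds_{i+1}):=\mu_{V_i}(dv_i)\,T_i(s_i,v_i,ds_{i+1})$ for the reference block kernel. The factorization~\eqref{eq:complete_ref} says $\bar\mu$ is generated by $\mu_{S_1}$ followed by the kernels $\kappa_i$, each depending on the past only through $s_i$. Applying the chain rule to $\mu$ with the grouping $(S_1,V_1,S_2),(V_2,S_3),\ldots,(V_{m-1},S_m)$ gives
\[
D_{\mathrm{KL}}(\mu\Vert\bar\mu)
=D_{\mathrm{KL}}\bigl(\nu_1^\mu\Vert \mu_{S_1}\otimes\kappa_1\bigr)
+\sum_{i=2}^{m-1}\Embb^\mu\!\Bigl[D_{\mathrm{KL}}\bigl(\mu(\cdot\mid H_{i-1},S_i)\,\Vert\,\kappa_i(S_i;\cdot)\bigr)\Bigr],
\]
where regular conditional distributions exist as in \cref{thm:feasibility_equivalence}.

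Next I would split each conditional term using the compensation identity for KL. Since the reference kernel $\kappa_i(S_i;\cdot)$ depends only on $S_i$, averaging the conditional law over $H_{i-1}$ given $S_i$ yields
\[
\Embb^\mu\!\bigl[D_{\mathrm{KL}}(\mu(\cdot\mid H_{i-1},S_i)\Vert\kappa_i(S_i;\cdot))\bigr]
=\Embb^\mu\!\bigl[D_{\mathrm{KL}}(\mu(\cdot\mid S_i)\Vert\kappa_i(S_i;\cdot))\bigr]
+I_\mu\!\bigl((V_i,S_{i+1});H_{i-1}\mid S_i\bigr).
\]
This follows by writing the first density as the product of $d\mu(\cdot\mid H_{i-1},S_i)/d\mu(\cdot\mid S_i)$ and $d\mu(\cdot\mid S_i)/d\kappa_i$ and integrating. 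Both pieces are nonnegative and their sum is finite, so no cancellation problem arises.

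Then I would apply the same chain rule to $\mathsf M\mu$. By \eqref{eq:markovization}, its first block is $\nu_1^\mu$ and its later conditional kernels are exactly $\mu(\cdot\mid s_i)$, which depend only on $s_i$. By block preservation in \cref{thm:feasibility_equivalence}, $\Law_{\mathsf M\mu}(S_i)=\Law_\mu(S_i)$. Hence
\[
D_{\mathrm{KL}}(\mathsf M\mu\Vert\bar\mu)=D_{\mathrm{KL}}(\nu_1^\mu\Vert\mu_{S_1}\otimes\kappa_1)+\sum_{i\ge2}\Embb^\mu\bigl[D_{\mathrm{KL}}(\mu(\cdot\mid S_i)\Vert\kappa_i(S_i;\cdot))\bigr].
\]
Applying the chain rule once more to the pair $(\mu,\mathsf M\mu)$ gives $D_{\mathrm{KL}}(\mu\Vert\mathsf M\mu)=\sum_{i\ge2}I_\mu((V_i,S_{i+1});H_{i-1}\mid S_i)$, which is \cref{prop:information_loss}. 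Its finiteness hypothesis is now automatic, because these mutual-information terms are bounded by $D_{\mathrm{KL}}(\mu\Vert\bar\mu)<\infty$. Adding the three displays yields \eqref{eq:kl_markovization}.

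For the consequence, let $\mu^*$ be a finite-entropy minimizer over $\cP_{\mathrm{full}}$. By \cref{thm:feasibility_equivalence}, $\mathsf M\mu^*\in\cP_{\mathrm{full}}$, and by \eqref{eq:kl_markovization} it has entropy no larger than that of $\mu^*$. Minimality then forces $D_{\mathrm{KL}}(\mu^*\Vert\mathsf M\mu^*)=0$, so $\mu^*=\mathsf M\mu^*\in\cP_{\mathrm{stitch}}$.

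The step I expect to be the main obstacle is the measure-theoretic justification of the chain rule and compensation identity in the continuum. In particular, one must check that $\mu\ll\bar\mu$ implies $\mu(\cdot\mid H_{i-1},S_i)\ll\kappa_i(S_i;\cdot)$ for $\mu$-a.e.\ history, even though $\Law_\mu(S_i)$ may differ from $\Law_{\bar\mu}(S_i)$. I would handle this by disintegrating the density $d\mu/d\bar\mu$ directly: its conditional normalizations give the kernel densities, and the resulting Radon--Nikodym derivatives are jointly measurable.
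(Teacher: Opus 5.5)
Your proposal is correct and follows essentially the same route as the paper: the chain rule along the chronological block disintegration, with the intermediate kernel $\mu(dv_i,ds_{i+1}\mid s_i)$ inserted so that the history-dependence terms give $\sum_i I_\mu((V_i,S_{i+1});H_{i-1}\mid S_i)=D_{\mathrm{KL}}(\mu\Vert\mathsf M\mu)$, the remaining terms depend only on the shared block laws and reassemble into $D_{\mathrm{KL}}(\mathsf M\mu\Vert\bar\mu)$, and the same minimality argument then applies. You also show that the finiteness hypothesis of \cref{prop:information_loss} follows from $D_{\mathrm{KL}}(\mu\Vert\bar\mu)<\infty$ via this decomposition, and you check a.e.\ absolute continuity of the kernels; both points are left implicit in the paper's sketch.
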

The proof is given in \cref{app:proof_kl_markovization}.

Thus the current reference gives stitching a precise interpretation: it is the minimum-information exact completion of the local calibrations.
The global state space becomes essential when one adds a history-dependent reference, a cross-period target, or a dependence-sensitive payoff objective; it also exposes the range of path laws that the local smiles leave unidentified.

For numerical stability, the reference tensor is assembled in the log domain by iterated outer sums of the log-marginals and log-kernels.

\section{Robust Finite-Dimensional Calibration}\label{sec:algorithms}

We now solve the globally coupled program.
\Cref{subsec:discrete_formulation} states the finite-dimensional problem and the hard/soft split.
\Cref{subsec:priority_test} gives a controlled feasible/infeasible test of that split, and \cref{subsec:augmented} introduces the augmented-Bregman solver.
\Cref{subsec:unification} relates the idealized exact-projection limit to the implemented finite-budget method.

\subsection{Finite-Dimensional Program and the Hard/Soft Split}\label{subsec:discrete_formulation}

Fix $m$ maturities, an initial SPX grid with $n_S$ points, VIX grids with $n_V$ points, and innovation grids with $n_Z$ points.
We discretize the law as a tensor $\pi$ on innovation coordinates
\[
(S_1,V_1,Z_1,V_2,Z_2,\ldots,V_{m-1},Z_{m-1}),
\]
with
$S_{i+1}=f_i(S_i,V_i,Z_i)$ derived recursively.
For common axis sizes its dimension is
$d=n_S n_V^{m-1}n_Z^{m-1}$; the corresponding law in market coordinates is its pushforward.

The discrete problem matches finitely many forward and call-price moments rather than entire continuous marginals.
Writing all selected rows compactly as $A\pi=b$ with $\pi$ on the probability simplex $\Delta_d$, the ideal exact discrete problem is
\begin{equation}\label{eq:Pm}
    \min_{\pi\in\Delta_d}\; \sum_{z} \pi(z)\,\ln\frac{\pi(z)}{\bar\pi(z)}
    \quad\text{s.t.}\quad A\pi = b,
\end{equation}
where $A$ stacks the quote-moment rows approximating (C1), the conditional martingale rows (C2), and the conditional dispersion rows (C3).
We never materialize $A$; all operations act directly on $\pi$.

\paragraph{Conditional-residual metrics.} For a finite-grid implementation, conditional accuracy can be summarized by relative, mass-weighted residuals over populated conditioning cells.
Fix a transition $i$ and a conditioning cell $c$ (an atom of the filtration $\cF_i$); write $w_c$ for its probability mass and $\Embb[\,\cdot\mid\cF_i{=}c\,]$ for the corresponding conditional expectation.
The exact theory above is written in forward units.
When the coordinates retain nonconstant market forwards, let $d_i:=F_{i+1}/F_i$ and evaluate both conditions on the forward-adjusted transition.
The per-cell relative residuals are
\begin{equation}\label{eq:rel_residuals}
    r^{\mathrm{disp}}_{i,c} = \frac{\bigl|\,\Embb[L(S_{i+1}/(d_iS_i))\mid\cF_i{=}c] - V_i^2\,\bigr|}{V_i^2},
    \qquad
    r^{\mathrm{mart}}_{i,c} = \left|\,\frac{\Embb[S_{i+1}\mid\cF_i{=}c]}{d_iS_i} - 1\,\right|,
\end{equation}
the dispersion residual measured relative to $\mathrm{VIX}_i^2$ and the martingale residual relative to the forward-adjusted level.
For each family we report the mass-weighted mean of these residuals over the cells carrying at least $10\%$ of the peak conditioning mass, worst-cased over transitions:
\begin{equation}\label{eq:massw}
    \edisp \;=\; \max_i \!\!\sum_{c:\,w_c\ge 0.1\,w_{\max}}\!\! \tilde w_c\, r^{\mathrm{disp}}_{i,c},
    \qquad
    \emart \;=\; \max_i \!\!\sum_{c:\,w_c\ge 0.1\,w_{\max}}\!\! \tilde w_c\, r^{\mathrm{mart}}_{i,c},
\end{equation}
with $\tilde w_c$ the masses renormalized over the retained cells.
The displayed $10\%$ threshold is illustrative: any thresholded statistic must be accompanied by its retained probability mass and an all-cell statistic before it is used to discuss feasibility.

\paragraph{Constraint treatment.} The constraint families do not play symmetric computational roles, and our solver treats them differently:
\begin{itemize}
    \item \textbf{Bregman-corrected:} normalization, forward, and selected call-price constraints are visited cyclically with clipped Newton/Bregman corrections at every inner sweep.
    \item \textbf{Penalized:} the active conditional martingale and dispersion rows enter the mirror gradient with tunable weights.
\end{itemize}
Each row correction targets its affine hyperplane, but one finite cyclic pass need not leave the iterate in the intersection of all quote rows.
Accordingly, we report the final quote and implied-volatility errors rather than calling the finite-budget iterates exactly marginal-feasible.
The rationale for the split is economic: observable option prices receive priority, while discrepancies caused by input or finite-support tension remain visible in the structural residuals.

To formalize this treatment independently of a particular iteration schedule, let
$A_h\pi=b_h$ collect the normalization, forward, and selected option-price equalities, let
$\mathcal H:=\{\pi\in\Delta_d:A_h\pi=b_h\}$ be nonempty, and let $A_c\pi=b_c$ collect the penalized conditional rows.
Set
\begin{equation}\label{eq:penalty_residual}
r(\pi):=\tfrac12\lVert A_c\pi-b_c\rVert^2,
\end{equation}
which gives one unscaled conditional residual model.
For $\lambda>0$, consider
\begin{equation}\label{eq:penalty_problem}
\pi_\lambda
:=
\argmin_{\pi\in\mathcal H}
\left\{D_{\mathrm{KL}}(\pi\Vert\bar\pi)+\lambda r(\pi)\right\}.
\end{equation}

\begin{proposition}[Penalty limit on a finite grid]\label{prop:penalty_limit}
Suppose $\bar\pi$ is positive on the allowed finite support and $\mathcal H$ is nonempty.
Then~\eqref{eq:penalty_problem} has a unique solution.
As $\lambda\to\infty$, $r(\pi_\lambda)$ converges to
$r_\star:=\min_{\pi\in\mathcal H}r(\pi)$, and $\pi_\lambda$ converges to the minimum-KL element of $\argmin_{\pi\in\mathcal H}r(\pi)$.
If the full discrete system is feasible, then $r_\star=0$ and the limit is the exact constrained KL projection; otherwise the limit is the least-violating law that preserves the $A_h$ constraints.
\end{proposition}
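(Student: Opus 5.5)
The argument is a standard exterior-penalty analysis on the compact polytope $\mathcal H$; the only delicate point is the tie-breaking step at the end. Write $F_\lambda(\pi):=D_{\mathrm{KL}}(\pi\Vert\bar\pi)+\lambda r(\pi)$. Since $\bar\pi>0$ on the finite support, $D_{\mathrm{KL}}(\cdot\Vert\bar\pi)$ is finite, continuous and strictly convex on $\Delta_d$. Next, $r$ is a convex quadratic, and $\mathcal H$ is a nonempty compact convex polytope, being the intersection of the simplex with an affine subspace. Continuity on a compact set gives existence of a minimizer, and strict convexity of $F_\lambda$ (strictly convex plus convex) gives uniqueness of $\pi_\lambda$.

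For the limit I will compare $\pi_\lambda$ against a fixed competitor. Let $\mathcal H_\star:=\argmin_{\pi\in\mathcal H}r(\pi)$. This set is nonempty by compactness and convex, being a sublevel set of a convex function on a convex set. Let $\pi_\star$ be the unique minimizer of $D_{\mathrm{KL}}(\cdot\Vert\bar\pi)$ over $\mathcal H_\star$; uniqueness again follows from strict convexity. Optimality of $\pi_\lambda$ gives
\[
D_{\mathrm{KL}}(\pi_\lambda\Vert\bar\pi)+\lambda r(\pi_\lambda)\le D_{\mathrm{KL}}(\pi_\star\Vert\bar\pi)+\lambda r_\star .
\]
Using $r(\pi_\lambda)\ge r_\star$, this yields two facts. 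First, $D_{\mathrm{KL}}(\pi_\lambda\Vert\bar\pi)\le D_{\mathrm{KL}}(\pi_\star\Vert\bar\pi)$. Second, since $D_{\mathrm{KL}}\ge0$, we get $0\le r(\pi_\lambda)-r_\star\le D_{\mathrm{KL}}(\pi_\star\Vert\bar\pi)/\lambda\to0$, which is the residual convergence. Compactness then gives the convergence of iterates. Take any subsequence with $\pi_{\lambda_k}\to\hat\pi\in\mathcal H$. Continuity of $r$ gives $r(\hat\pi)=r_\star$, so $\hat\pi\in\mathcal H_\star$. Lower semicontinuity (here continuity) of KL gives $D_{\mathrm{KL}}(\hat\pi\Vert\bar\pi)\le D_{\mathrm{KL}}(\pi_\star\Vert\bar\pi)$. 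Uniqueness of the minimum-KL element of $\mathcal H_\star$ forces $\hat\pi=\pi_\star$. Since every subsequential limit equals $\pi_\star$, the whole family converges.

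The final dichotomy is then read off. If the full system $A\pi=b$ has a solution in $\Delta_d$, that solution lies in $\mathcal H$ with $r=0$, so $r_\star=0$. In that case $\mathcal H_\star=\{\pi\in\Delta_d:A_h\pi=b_h,\ A_c\pi=b_c\}$ is exactly the feasible set of~\eqref{eq:Pm}, and $\pi_\star$ is the exact KL projection. Otherwise $r_\star>0$, and $\pi_\star$ is by construction the least-violating law in $\mathcal H$, selected by minimum KL.

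The main obstacle is this tie-breaking step: identifying the limit as the \emph{minimum-KL} element, rather than merely some element of $\mathcal H_\star$. It rests on the inequality $D_{\mathrm{KL}}(\pi_\lambda\Vert\bar\pi)\le D_{\mathrm{KL}}(\pi_\star\Vert\bar\pi)$, which holds uniformly in $\lambda$, and on strict convexity of KL on the convex set $\mathcal H_\star$. For that reason I will check convexity of $\mathcal H_\star$ explicitly rather than assume it.
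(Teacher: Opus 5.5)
Your proposal is correct and follows essentially the same route as the paper's proof. Both arguments run as follows:
\begin{itemize}
\item existence and uniqueness come from compactness of $\mathcal H$ and strict convexity of the penalized objective;
\item comparing against a point of $\argmin_{\mathcal H} r$ gives $0\le r(\pi_\lambda)-r_\star\le D_{\mathrm{KL}}(\pi_\star\Vert\bar\pi)/\lambda$;
\item a subsequence argument, using the minimum-KL competitor and strict convexity on the convex set $\argmin_{\mathcal H}r$, identifies the unique limit.
\end{itemize}
Your explicit derivation of $D_{\mathrm{KL}}(\pi_\lambda\Vert\bar\pi)\le D_{\mathrm{KL}}(\pi_\star\Vert\bar\pi)$ and your check that $\mathcal H_\star$ is convex simply spell out steps the paper states tersely.
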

The proof is given in \cref{app:proof_penalty_limit}.

The iterative method below uses the same residual gradient inside an augmented-Lagrangian mirror iteration, rather than solving~\eqref{eq:penalty_problem} to convergence for every $\lambda$.
Thus \cref{prop:penalty_limit} supplies an idealized variational interpretation, not a convergence claim for a fixed iteration budget.

\subsection{Controlled Priority-Allocation Test}\label{subsec:priority_test}

To isolate the effect of the hard/soft split from market-data conventions, we use a finite Gaussian martingale-coupling problem with one analytic feasibility knob.
Writing $\rho_{\mathrm{var}}:=\operatorname{Var}(Y)/\operatorname{Var}(X)$, the source and target marginals have the same mean and satisfy $\rho_{\mathrm{var}}\approx1.3$ in the feasible case and $\rho_{\mathrm{var}}\approx0.7$ in the infeasible case, as confirmed by the direct LP check below.
The three methods use the same $41\times41$ state grid and constraints but assign the marginal and conditional rows differently.
The complete specification and results are given in \cref{sec:experiments,app:omd_theory,tab:feasibility}.
On the infeasible instance, cyclic row projection leaves a marginal residual of $1.9\times10^{-2}$, whereas the marginal-priority scheme leaves $7.4\times10^{-4}$, about $25$ times smaller, while exposing a conditional residual of $1.1\times10^{-3}$.
This controlled result supports assigning priority to prescribed marginal rows; it does not establish infeasibility of any empirical SPX--VIX dataset.

\subsection{The Augmented-Bregman Solver}\label{subsec:augmented}

We treat the conditional rows through an augmented-Lagrangian mirror step and interleave it with cyclic Bregman corrections of the $A_h$ rows.
Recall that \emph{mirror descent} is the standard first-order scheme that, at each step, moves along the (sub)gradient of the objective in the geometry induced by a strictly convex mirror map and then projects back with the associated Bregman divergence~\citep{nemirovski1983problem,beck2003mirror}; with the negative-entropy mirror map on the simplex its update is the multiplicative (exponentiated-gradient) reweighting $\pi\odot\exp(-\alpha g)$ followed by renormalization, and the exact KL/Bregman projection onto a linear constraint reduces to the Sinkhorn-type rescaling used for the hard families~\citep{Benamou2015Iterative}.
Let $A_c,b_c$ collect the active conditional rows, with dual variables $y$ and penalty weight $\lambda$.
At inner iterate $\pi_t$, the method forms
\begin{equation}\label{eq:aug_gradient}
g_t=A_c^\top\!\left[y+\lambda(A_c\pi_t-b_c)\right]
\end{equation}
and takes the entropic mirror step
\begin{equation}\label{eq:mirror_step}
\widetilde\pi_{t+1}
=\argmin_{\pi\in\Delta_d}
\left\{\langle g_t,\pi\rangle+\eta_t^{-1}D_{\mathrm{KL}}(\pi\Vert\pi_t)\right\}
\propto \pi_t\odot\exp(-\eta_t g_t),
\end{equation}
where $\eta_t$ is clipped to bound the largest log-weight update.
It then applies one clipped Newton correction in KL geometry to each $A_h$ row.
After the inner budget, the method updates
$y\leftarrow y+\lambda(A_c\pi-b_c)$ and optionally increases $\lambda$.
Because both blocks are revisited throughout the run, the method avoids making one large conditional projection whose entire correction is handed back at the next quote pass.

\begin{algorithm}[H]
\caption{Augmented-Bregman solver for SPX--VIX calibration}
\label{alg:augmented}
\DontPrintSemicolon
\SetKwInOut{Input}{Input}\SetKwInOut{Output}{Output}
\Input{Reference and rows; penalty schedule; $\alpha$, $c$; iteration counts}
\Output{Calibrated tensor $\pi$}
\BlankLine
$\pi \leftarrow \bar\mu$;\quad $y \leftarrow 0$;\quad $\lambda \leftarrow \lambda_0$\;
\For{$k = 1, \ldots, K_{\mathrm{out}}$}{
  \For{$t = 1, \ldots, K_{\mathrm{in}}$}{
    $g \leftarrow A_c^\top\big(y + \lambda (A_c\pi - b_c)\big)$ \tcp*{penalty gradient (sparse)}
        $\eta \leftarrow \min\{\alpha,c/\lVert g\rVert_\infty\}$;\quad $\pi \leftarrow \pi \odot \exp(-\eta g)$;\quad renormalize\;
    Apply one Bregman--Newton correction to each row of $(A_h,b_h)$\;
  }
  $y \leftarrow y + \lambda (A_c \pi - b_c)$;\quad $\lambda \leftarrow \min(\lambda\gamma, \lambda_{\max})$\;
}
\Return $\pi$\;
\end{algorithm}

\subsection{Relation to Exact Projection}\label{subsec:unification}

At the level of exactly solved optimization problems, the quadratic penalty
$\tfrac{\lambda}{2}\lVert A_c\pi-b_c\rVert^2$ converges as $\lambda\to\infty$ to the indicator of the conditional constraint set, provided that set intersects $\{\pi:A_h\pi=b_h\}$.
This idealized limit does not imply that a fixed-budget implementation converges as $\lambda$ grows.
Separately, replacing each first-order mirror step by the exact KL projection onto the visited row recovers cyclic Bregman/Sinkhorn updates.
Thus the methods share an entropic geometry, but finite-budget iterates should not be identified with exact points on an asymptotic solution path (\cref{prop:sinkhorn_omd}).
The mirror-descent view and controlled infeasibility analysis are collected in \cref{app:omd_theory}; the per-sweep cost accounting is in \cref{app:complexity}.

\section{Extension to Misaligned SPX and VIX Maturities}\label{sec:misaligned}

In practice, the SPX and VIX maturities selected for a calibration need not coincide.
When the calibration instruments do not supply an SPX marginal at a VIX settlement date, a standard fixed-marginal monthly block must either interpolate that missing law or introduce it as an endogenous variable.
The global formulation provides a natural framework on a \emph{merged timeline} $\mathcal{T} = \mathcal{T}_S \cup \mathcal{T}_V = \{t_1 < \cdots < t_N\}$, where $\mathcal{T}_S$ are SPX expiries and $\mathcal{T}_V$ are VIX settlement dates, treating the SPX level at VIX-only dates $\mathcal{T}_{\mathrm{free}} := \mathcal{T}_V \setminus \mathcal{T}_S$ as a \emph{free} variable (no marginal constraint there).
The state is $\mathbf{X} = (S_{t_1}, \ldots, S_{t_N}, V_1, \ldots, V_n)$ with filtration $\cF_k := \sigma(\{S_{t_\ell} : t_\ell \leq t_k\} \cup \{V_i :
T_i^V \leq t_k\})$.

\begin{assumption}[Non-overlapping VIX spans]\label{ass:non_overlap}
For each $i$, let $T_{\pi(i)}^S$ be the SPX maturity referenced by the $i$-th VIX contract and $\tau_i := T_{\pi(i)}^S - T_i^V$.
We assume $T_{i+1}^V \geq T_{\pi(i)}^S$ (non-overlapping spans) and $t_1 \in \mathcal{T}_S$.
\end{assumption}

Under \cref{ass:non_overlap}, the feasible set and reference measure extend with three changes.
There is no marginal constraint at $\mathcal{T}_{\mathrm{free}}$; each dispersion constraint spans the merged-grid steps in $[T_i^V,T_{\pi(i)}^S]$; and the filtration at a VIX date includes $V_i$.
The following proposition records the reference construction used in the non-overlapping case.

\begin{proposition}[Non-overlapping merged-timeline reference]\label{prop:merged_reference}
Draw $S_{t_1}$ from a prescribed initial law and draw each $V_i$ from its prescribed marginal at time $T_i^V$, independently of the preceding reference history.
On every merged-grid subinterval contained in $[T_i^V,T_{\pi(i)}^S]$, use a conditionally independent lognormal SPX increment with volatility $V_i$ and the subinterval length; on uncovered subintervals, use any strictly positive martingale transition kernel.
If the VIX spans do not overlap, the chronological product of these kernels is a well-defined martingale reference that matches the initial SPX and all VIX marginals and satisfies
\[
\Embb\!\left[-\frac{2}{\tau_i}\log\!\left(
\frac{S_{T_{\pi(i)}^S}}{d_iS_{T_i^V}}
\right)\middle|\cF_{T_i^V}\right]=V_i^2,
\]
where $d_i$ is the forward ratio over the span.
\end{proposition}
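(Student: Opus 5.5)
The plan is to build the reference chronologically on the merged grid as a product of Markov kernels. Then check the three claimed properties in order: well-definedness, martingale property with marginal matching, and the span-wise dispersion identity. Each property is verified by conditioning and the tower property.

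\emph{Step 1 (construction).} List the merged events in chronological order: the grid times $t_1<\cdots<t_N$, with each VIX draw inserted at its date $T_i^V$, placed after $S_{T_i^V}$ in the filtration. Each factor is a probability kernel from the current history to the next coordinate:
\begin{itemize}
\item the initial law for $S_{t_1}$, which is legitimate because $t_1\in\mathcal T_S$;
\item the constant kernel $\mu_{V_i}$ for $V_i$;
\item for each grid step $[t_k,t_{k+1}]$, either the lognormal kernel $s\exp(\sigma\sqrt{h}G-\tfrac12\sigma^2 h)$, with $h=t_{k+1}-t_k$ and $\sigma=V_i$ when the step lies in $[T_i^V,T^S_{\pi(i)}]$, or an arbitrary strictly positive martingale kernel otherwise.
\end{itemize}
The main point to verify is that each step's volatility is well defined. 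By \cref{ass:non_overlap}, $T_{i+1}^V\ge T^S_{\pi(i)}$, so every grid subinterval lies in at most one span; intervals can share only endpoints, and grid steps have positive length. Since both endpoints are merged-grid points, every step is either fully inside one span or fully outside all of them. Moreover, $V_i$ is $\cF_{T_i^V}$-measurable and is drawn before any step of its span. Ionescu-Tulcea, or finite iteration on a grid, then gives a unique probability measure. This bookkeeping is where I expect the only real care to be needed: overlap would force a step to carry two volatilities, and the dispersion identity would fail.

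\emph{Step 2 (martingale and marginals).} Every SPX kernel has conditional mean equal to the current level. For lognormal steps this holds because $\Embb[e^{\sigma\sqrt h G-\sigma^2h/2}]=1$, and it holds for the other kernels by assumption. With nonconstant forwards, multiply each step by its forward ratio. Iterating the tower property gives the martingale property with respect to $\cF_k$. The marginal of $S_{t_1}$ is the prescribed law by construction. Each $V_i$ is drawn from $\mu_{V_i}$ independently of the preceding history, so its marginal is $\mu_{V_i}$, and later kernels do not alter it.

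\emph{Step 3 (dispersion).} Let the span grid points be $T_i^V=t_a<\cdots<t_b=T^S_{\pi(i)}$. Write
\[
-\log\frac{S_{t_b}}{d_iS_{t_a}}=\sum_{k=a}^{b-1}-\log\frac{S_{t_{k+1}}}{d_{i,k}S_{t_k}},
\]
where the step forward ratios multiply to $d_i$. Each summand equals $-V_i\sqrt{h_k}G_k+\tfrac12V_i^2h_k$. By the tower property from $\cF_{t_{b-1}}$ back to $\cF_{T_i^V}$, and using that $V_i$ is fixed across the span, its conditional expectation is $\tfrac12V_i^2h_k$. Summing gives $\tfrac12V_i^2\tau_i$, and multiplying by $2/\tau_i$ yields $V_i^2$. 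Integrability follows from Gaussian moments together with $\Embb[V_i^2]<\infty$.
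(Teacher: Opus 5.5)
Your proposal is correct and follows essentially the same route as the paper. Well-definedness comes from finite iterated disintegration, and the martingale property from the unit conditional mean of each forward-adjusted increment; the dispersion identity comes from telescoping the log-increments with $\sum_k h_k=\tau_i$, and non-overlap is used only to ensure that no grid step carries two volatilities. Your added bookkeeping (span endpoints are merged-grid points, so each step lies entirely inside one span or outside all of them, and $V_i$ is $\mathcal{F}_{T_i^V}$-measurable before its span begins) spells out what the paper compresses into one sentence.
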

The proof is given in \cref{app:proof_merged_reference}.

The solver of \cref{sec:algorithms} then applies with multi-step dispersion rows.
The aligned case is recovered when $\mathcal{T}_{\mathrm{free}} = \emptyset$.
The definitions and tower-property arguments extend to this timeline; a continuum no-arbitrage converse would require the additional functional-analytic assumptions noted after \cref{thm:duality}.

\begin{proposition}[Standard fixed-marginal stitching is not identified at VIX-only dates]\label{prop:stitch_illposed}
If $T_i^V \notin \mathcal{T}_S$, the supplied SPX vanilla marginals do not determine the law $\mu_{S_{T_i^V}}$ required by a standard three-variable block beginning at $T_i^V$.
Consequently, such a block requires either an exogenous interpolation rule or an enlarged formulation that treats the missing SPX law as endogenous.
\end{proposition}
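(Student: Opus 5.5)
The argument is a non-identification construction. I will exhibit two sets of market data that agree on every supplied SPX vanilla marginal but force different laws of $S_{T_i^V}$. It then follows that no rule depending only on the supplied data can recover the law needed by the block. The only structural facts I use are these: the data consist of $\mu_{S_t}$ for $t\in\mathcal T_S$ and $\mu_{V_j}$ for the VIX dates, the time $T_i^V$ lies strictly between two adjacent SPX expiries $t_-<T_i^V<t_+$ (or after the last one), and the only requirement on $\mu_{S_{T_i^V}}$ is that it sits in convex order between $\mu_{S_{t_-}}$ and $\mu_{S_{t_+}}$. In addition, the dispersion relation ties it to $\mu_{S_{T_{\pi(i)}^S}}$ through the log-contract identity, which is the analogue of \eqref{eq:consistency_global}.

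First, I reduce to a statement about peacocks. Any law $\mu$ on the merged timeline that satisfies the martingale and dispersion constraints yields a marginal $\alpha:=\Law_\mu(S_{T_i^V})$. This $\alpha$ satisfies $\mu_{S_{t_-}}\preceq_{\mathrm{cx}}\alpha\preceq_{\mathrm{cx}}\mu_{S_{t_+}}$ and the scalar log-moment identity $\Embb^{\mu_{V_i}}[V_i^2]=\Embb^{\mu_{S_{T^S_{\pi(i)}}}}[L]-\Embb^{\alpha}[L]$, with $L$ rescaled by $\tau_i$. Conversely, I will show that at least two distinct $\alpha$ are consistent with the same data. The simplest route is to construct two explicit global laws $\mu,\mu'$ using the reference construction of \cref{prop:merged_reference}. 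Both keep $\mu_{S_{t_-}}$ fixed and the VIX marginal fixed. On the uncovered subinterval $[t_-,T_i^V]$ they use two different strictly positive martingale kernels, for example lognormal kernels with volatilities $\sigma\ne\sigma'$, and they compensate on $[T_i^V,t_+]$ so that the law at $t_+$ agrees.

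This compensation is the main obstacle, because I need exactly matching SPX marginals at every expiry in $\mathcal T_S$. I would avoid the difficulty by choosing the data to fit the construction instead of the other way round. Take $\mu_{S_{t_-}}=\delta_{S_0}$, take $V_i$ deterministic equal to $v$, and let the VIX span end at $t_+=T^S_{\pi(i)}$. Consider a Gaussian log-variance split $\sigma_1^2(T_i^V-t_-)+v^2\tau_i=\Sigma^2$ with $\Sigma$ fixed, so that $\mu_{S_{t_+}}$ is the lognormal law with total variance $\Sigma^2$. The data are then $\mu_{S_{t_-}}$, $\mu_{S_{t_+}}$ and $\delta_v$, and these pin $\sigma_1$, so this choice does not yet give non-identification.

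To get it, I randomize instead. Keeping the same data, I replace the deterministic first leg by a law $\alpha$ in the convex-order interval that has the same $\Embb^\alpha[L]$ but a different shape. I then use a Strassen or Kellerer martingale coupling from $\alpha$ to $\mu_{S_{t_+}}$, with the dispersion constraint handled by a mixture of lognormal kernels whose conditional log-variance equals $v^2\tau_i$. The feasibility of that last step is the delicate point. If it becomes messy, I would fall back on a finite-state example on a small tree, in the spirit of the construction behind \cref{thm:strict_inclusion}. There I would exhibit two explicit three-date trees with identical laws at $t_-$ and $t_+$, identical VIX law, and different middle marginals, checking (C2)–(C3) directly.

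Finally, I conclude the argument. Since two feasible laws give different $\mu_{S_{T_i^V}}$ while sharing all supplied SPX and VIX data, the data do not determine this law. A standard block $\cP(\mu_{S_{T_i^V}},\mu_{V_i},\mu_{S_{T^S_{\pi(i)}}})$ therefore needs its first marginal supplied from outside the data, either by an interpolation rule or by treating it as an unknown in the enlarged merged-timeline program.
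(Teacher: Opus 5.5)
\textbf{Gap: the two laws are never actually constructed.} The paper's proof is definitional. A standard block takes its initial SPX law as an input, and when $T_i^V\notin\mathcal T_S$ none of the supplied surfaces provides it, so it must be interpolated or made endogenous. You aim instead at a genuine non-identification result: two feasible laws with the same SPX \emph{and} VIX data but different laws of $S_{T_i^V}$. That is stronger and more informative, but your proposal never produces the two laws.

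By your own account, your lognormal example pins $\sigma_1$. The randomized step is left as ``the delicate point'', and the finite tree is only promised. The missing step is also not routine:
\begin{itemize}
\item Strassen and Kellerer only give a martingale coupling from $\alpha$ to $\mu_{S_{t_+}}$. You need a kernel $K(s,dy)$ with $\int y\,K(s,dy)=s$ \emph{and} $\int\log(y/s)\,K(s,dy)=-\tfrac12v^2\tau_i$ for $\alpha$-a.e.\ $s$.
\item Convex order plus the scalar log-moment identity is necessary for such a kernel, but you do not show it is sufficient.
\item With your lognormal target, any kernel whose return $S_{t_+}/S_{T_i^V}$ is independent of $S_{T_i^V}$ fails. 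This includes a fixed mixture of lognormals. By Cram\'er's decomposition theorem it forces $\log S_{T_i^V}$ to be Gaussian with variance $\Sigma^2-v^2\tau_i$, so $\alpha$ is the original lognormal law again.
\end{itemize}
So the kernels must be genuinely state-dependent, and their existence is exactly what is unchecked.

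\textbf{Two ways to close it.} First, the proposition only asserts that the \emph{SPX} marginals fail to determine the law. Your deterministic lognormal example already proves this if you vary $v$ while keeping $\Sigma$ fixed. Two values $v\neq v'$ give identical SPX marginals at every date of $\mathcal T_S$ but different $\sigma_1$, hence different laws of $S_{T_i^V}$.

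Second, if you also fix the VIX law, use a three-atom terminal law. Such a law is pinned by its mass, mean and log-moment, because $\log$ is strictly concave. Those three quantities are automatically reproduced by any mean-one middle law with the correct log-moment. For example, take:
\begin{itemize}
\item $S_{t_-}=1$ and $V_i\equiv v$ with $v^2\tau_i=-\log 0.99$;
\item terminal law with weights $(\tfrac14,\tfrac12,\tfrac14)$ on $\{0.81,0.99,1.21\}$;
\item middle law either $\{0.9,1.1\}$ with weights $(\tfrac12,\tfrac12)$, or $\{1-b,1,1+b\}$ with weights $(\tfrac14,\tfrac12,\tfrac14)$ and $b^2=0.0199$.
\end{itemize}
Both middle laws have log-moment $\tfrac12\log 0.99$. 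For each middle atom $s$, solve the $3\times3$ system for a kernel on the three terminal atoms with mean $s$ and log-mean $\log s-\tfrac12 v^2\tau_i$. For the second middle law the weights are approximately $(0.753,0.226,0.021)$, $(0.112,0.750,0.137)$ and $(0.022,0.273,0.705)$; for the first they are $(\tfrac12,\tfrac12,0)$ and $(0,\tfrac12,\tfrac12)$. All are nonnegative, so both global laws are feasible with identical supplied data.
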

The proof is given in \cref{app:proof_stitch_illposed}.

The full merged-timeline feasible set and its duality mirror \cref{def:full_feasible,thm:duality} with the three changes above; we omit the restatement.

\paragraph{Overlapping windows.}
If two VIX spans overlap, assigning one volatility state to their shared subinterval generally prevents the product reference from satisfying both dispersion identities automatically.
A consistent extension must model the variance allocated to the overlap and impose martingale rows at every adjacent event time.
We leave the corresponding exact reference construction for future work.
\Cref{subsec:market_misaligned} reports an overlapping-window calibration as a finite-grid numerical comparison; it is not covered by \cref{prop:merged_reference} and is not used to support any feasibility theorem.

\section{Finite-State Checks and Market-Data Illustrations}\label{sec:experiments}

We first report two self-contained finite-state checks whose role is to verify the structural identities and the allocation of residuals across constraint families.
We then report numerical calculations on smoothed SPX and VIX surfaces.
Those market-data calculations illustrate finite-budget calibration behavior; they do not establish a separation between exact local and global feasibility, which is ruled out by \cref{thm:feasibility_equivalence}.

\paragraph{Block-preserving Markovization.}
The finite tree used in \cref{thm:strict_inclusion} is evaluated directly, without numerical optimization.
Its global feasibility residual is $1.42\times10^{-14}$.
Markovization preserves both adjacent triple laws but changes $\Cov(V_1,V_2)$ from $0.0400$ to $0.0110$, a $72.4\%$ attenuation.
For strikes $K=(0,0.1,0.2,0.3)$, the prices of $(V_2-V_1-K)^+$ change from $(0.050,0,0,0)$ to $(0.113,0.063,0.045,0.027)$.

\paragraph{Controlled feasibility knob.}
Let $x_j=0.25+0.0375j$, $j=0,\ldots,40$, and let the source weights be proportional to $\exp(-(x_j-1)^2/(2\sigma_X^2))$ with $\sigma_X=0.15$.
The target weights use $\sigma_Y=\sigma_X\sqrt{1\pm0.30}$ on the same grid; after normalization the discrete variance ratios are $\rho_{\mathrm{var}}\approx1.3$ and $\rho_{\mathrm{var}}\approx0.7$.
We impose both marginals and the martingale rows $\sum_k\pi_{jk}(x_k-x_j)=0$ for the $25$ source states whose marginal mass is at least $1\%$ of the peak.
The cyclic-projection and marginal-priority methods use $800$ sweeps, while all-soft OMD uses $6000$; the soft updates use penalty $200$, base step $0.5$, and log-tilt cap $1$.
The resulting affine systems have $1681$ variables and $107$ rows.
A direct linear-programming check finds the first feasible with maximum equality residual $1.5\times10^{-16}$.
For the contracted case, let $A=\{8,\ldots,32\}$ be the active source indices.
Any feasible coupling would satisfy, by conditional Jensen on the active rows,
\begin{equation}\label{eq:synthetic_infeasibility_certificate}
\sum_k\nu_k(x_k-1)^2
\geq
\sum_{j\in A}\mu_j\bigl(\Embb[Y-1\mid X=x_j]\bigr)^2
=\sum_{j\in A}\mu_j(x_j-1)^2.
\end{equation}
Numerically, the right side is $0.022043550052$, whereas the target variance on the left is $0.015749999517$.
Thus the thresholded affine system is infeasible even though the tail martingale rows are omitted; an independent LP returns the same conclusion.
The entries in \cref{tab:feasibility} are the median maximum residuals over the final $10\%$ of each run and show explicitly how each method allocates the incompatibility.

\subsection{Market-Data Setup and Scope}\label{subsec:market_setup}

We report calculations on P-spline-smoothed SPX and VIX surfaces from the 2026-04-21 snapshot.
These are the P-spline input surfaces used throughout the 2026-04-21 experiment suite.
The five surfaces comprise SPX maturities at $23$, $57$, and $87$ days and VIX maturities at $27$ and $56$ days.
The five-axis implementation pairs the two VIX surfaces with consecutive SPX transition blocks on the solver grid
$(S_1,V_1,Z_1,V_2,Z_2)$ and uses a 30-day log-contract horizon for each block.
Because the actual SPX and VIX event dates do not coincide, this is an aligned-coordinate approximation rather than an exact event-time model of the overlapping windows.

The numerical grid has shape
$(30,25,8,25,8)$, or $1.2\times10^6$ cells.
Selected forward and call-price constraints are revisited through cyclic Bregman corrections, while active martingale and dispersion rows are penalized.
The conditional-row builder activates conditioning cells whose prior mass is at least $1\%$ of the peak prior conditioning mass.
The headline conditional statistics $\edisp$ and $\emart$ are the thresholded bulk metrics of~\eqref{eq:massw}, evaluated on cells whose final conditioning mass is at least $10\%$ of the peak and then worst-cased over the two transitions.
They are finite-grid diagnostics, not certificates that every conditional row is satisfied.
Every penalty value below is an independent fixed-budget run, so the path describes tested operating points rather than exact solutions of~\eqref{eq:penalty_problem}.

These calculations address the numerical question developed in \cref{subsec:handoff,sec:algorithms}: how a finite implementation allocates discrepancies among quote and conditional rows.
They do not show that global coupling rescues an infeasible exact monthly block, nor are they used to prove \cref{thm:feasibility_equivalence,thm:strict_inclusion}.

\subsection{Independent Relaxed Blocks at a Shared Maturity}\label{subsec:market_seam}

The first calculation illustrates the practical seam of \cref{subsec:seam}.
Two relaxed three-variable blocks were calibrated independently and their two numerical representations of the shared $57$-day SPX smile were compared.
At the matched operating point displayed in \cref{fig:market_seam}, the maximum and mean differences over the invertible strike range were approximately $1.1$ and $0.3$ volatility points, respectively.
A separate post-projection ladder, reported in \cref{tab:seam_tightness}, tightened the conditional families in both blocks and found that the shared-smile discrepancy could increase substantially.
The post-projections also change the quote fit, so this calculation concerns independently processed relaxed blocks; it is not a counterexample to the exact gluing result.

\begin{figure}[ht]
\centering
\includegraphics[width=0.85\textwidth]{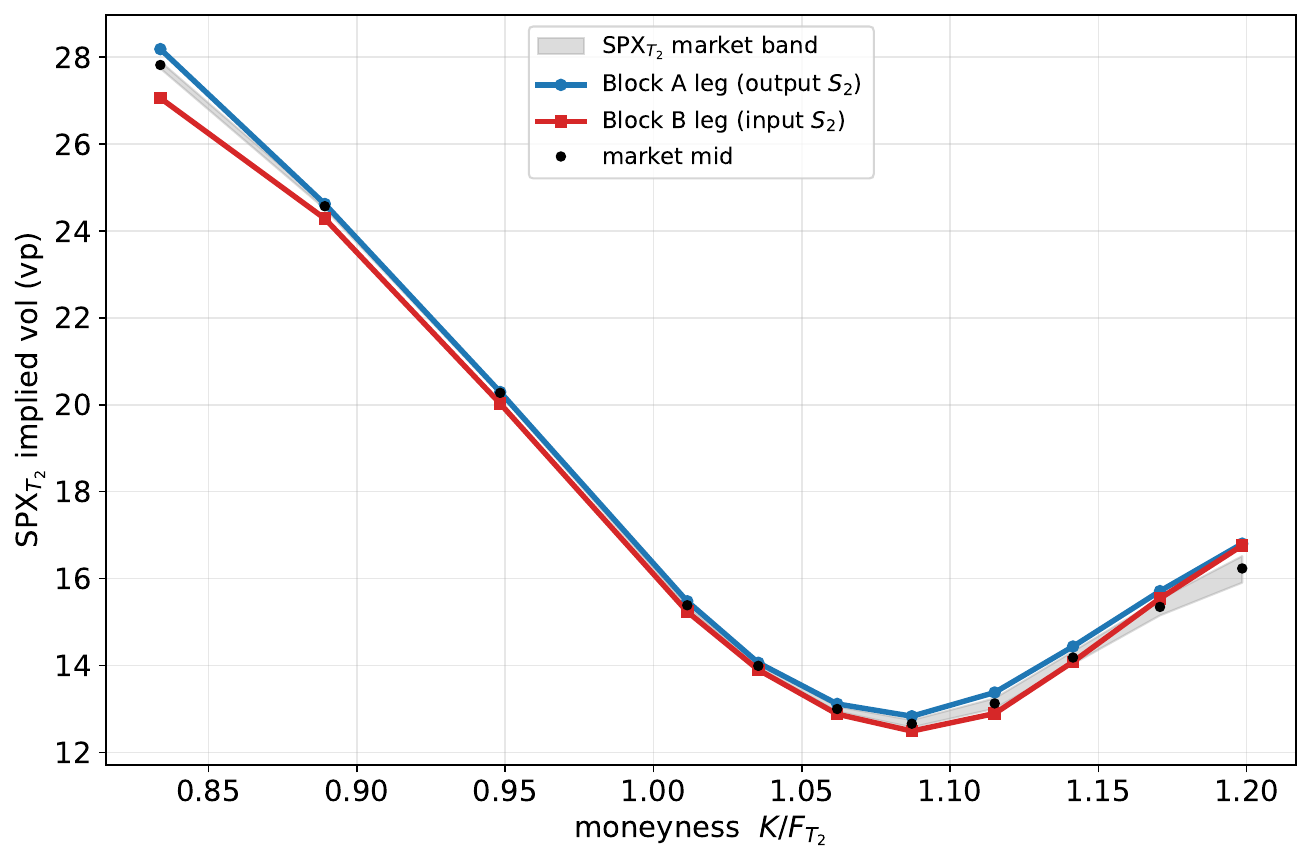}
\caption{Shared-maturity comparison for two independently calibrated relaxed blocks on the 2026-04-21 surfaces.
The terminal SPX representation of the first block and the initial SPX representation of the second block differ by up to approximately $1.1$ volatility points and by $0.3$ volatility points on average over the invertible strike range at the displayed matched operating point.
An exact stitched model with a prescribed common marginal has no such duplication; the figure diagnoses the separate numerical selection made by the relaxed block solves.}
\label{fig:market_seam}
\end{figure}

\begin{table}[ht]
\centering
\caption{Conditional post-projection ladder for the two relaxed blocks.
The middle columns give the reported blockwise dispersion diagnostics and the last two columns give the maximum and mean shared-smile differences over invertible strikes, in volatility points.
The values describe a finite post-projection path, not exact stitched feasible laws.}
\label{tab:seam_tightness}
\small
\begin{tabular}{@{}lcccc@{}}
\toprule
Conditional target & $\edisp^{A}$ & $\edisp^{B}$ & seam max (vp) & seam mean (vp) \\
\midrule
None & $1.3$ & $7.6$ & $0.91$ & $0.29$ \\
$10^{-1}$ & $1.3$ & $3.0$ & $1.22$ & $0.35$ \\
$10^{-2}$ & $3.1\times10^{-1}$ & $2.3\times10^{-1}$ & $1.13$ & $0.34$ \\
$10^{-3}$ & $3.2\times10^{-2}$ & $2.0\times10^{-2}$ & $3.62$ & $1.54$ \\
$10^{-4}$ & $3.8\times10^{-3}$ & $4.4\times10^{-3}$ & $5.40$ & $2.55$ \\
Exact conditional post-projection & $6.5\times10^{-14}$ & $9.3\times10^{-14}$ & $5.53$ & $2.62$ \\
\bottomrule
\end{tabular}
\end{table}

\subsection{Finite-Budget Penalty Path}\label{subsec:market_penalty}

\Cref{tab:market_penalty} reports the shared-penalty sweep on the five-surface numerical grid.
Across the tested path the worst implied-volatility error remains below $0.70$ volatility points.
Between $\lambda=1$ and the representative operating point $\lambda=10^4$, the thresholded bulk dispersion statistic falls from $5.39$ to $6.3\times10^{-2}$, approximately an $86$-fold reduction, while the bulk martingale statistic falls from $7.3\times10^{-3}$ to $7.6\times10^{-4}$.
The sequence is not monotone at every tested value, and the $\lambda=3\times10^4$ row is worse than the $\lambda=10^4$ row on both conditional summaries.
We therefore interpret $\lambda=10^4$ as a useful finite-budget operating point, not as an asymptotic optimum.

\begin{table}[ht]
\centering
\caption{Finite-budget conditional-penalty path on the smoothed 2026-04-21 five-surface inputs and the $1.2\times10^6$-cell grid.
The middle column is the worst implied-volatility error over the five fitted surfaces, in volatility points; $\edisp$ and $\emart$ are the thresholded bulk statistics of~\eqref{eq:massw}.
Each row is an independent fixed-budget solve.}
\label{tab:market_penalty}
\small
\begin{tabular}{@{}rccc@{}}
\toprule
$\lambda$ & worst smile (vp) & $\edisp$ & $\emart$ \\
\midrule
$1$        & $0.54$ & $5.39$ & $7.3\times10^{-3}$ \\
$10$       & $0.61$ & $2.31$ & $3.1\times10^{-3}$ \\
$100$      & $0.58$ & $1.14$ & $1.7\times10^{-3}$ \\
$300$      & $0.61$ & $0.37$ & $1.0\times10^{-3}$ \\
$1{,}000$  & $0.61$ & $8.9\times10^{-2}$ & $6.0\times10^{-4}$ \\
$3{,}000$  & $0.54$ & $7.1\times10^{-2}$ & $7.9\times10^{-4}$ \\
$10{,}000$ & $0.55$ & $6.3\times10^{-2}$ & $7.6\times10^{-4}$ \\
$30{,}000$ & $0.58$ & $8.7\times10^{-2}$ & $8.7\times10^{-4}$ \\
\bottomrule
\end{tabular}
\end{table}

At $\lambda=10^4$, the $10\%$-of-peak filter retains $78.1\%$ of the conditioning mass in the first transition and $44.3\%$ in the second.
When every positive-mass cell is included instead, the corresponding worst-over-transition mass-weighted statistics are $1.94$ for dispersion and $5.0\times10^{-3}$ for martingality, compared with the displayed bulk values $6.3\times10^{-2}$ and $7.6\times10^{-4}$.
This tail sensitivity is why we use the qualified term \emph{thresholded bulk residual} and do not treat the table as an exact-feasibility test.

\begin{figure}[ht]
\centering
\includegraphics[width=0.9\textwidth]{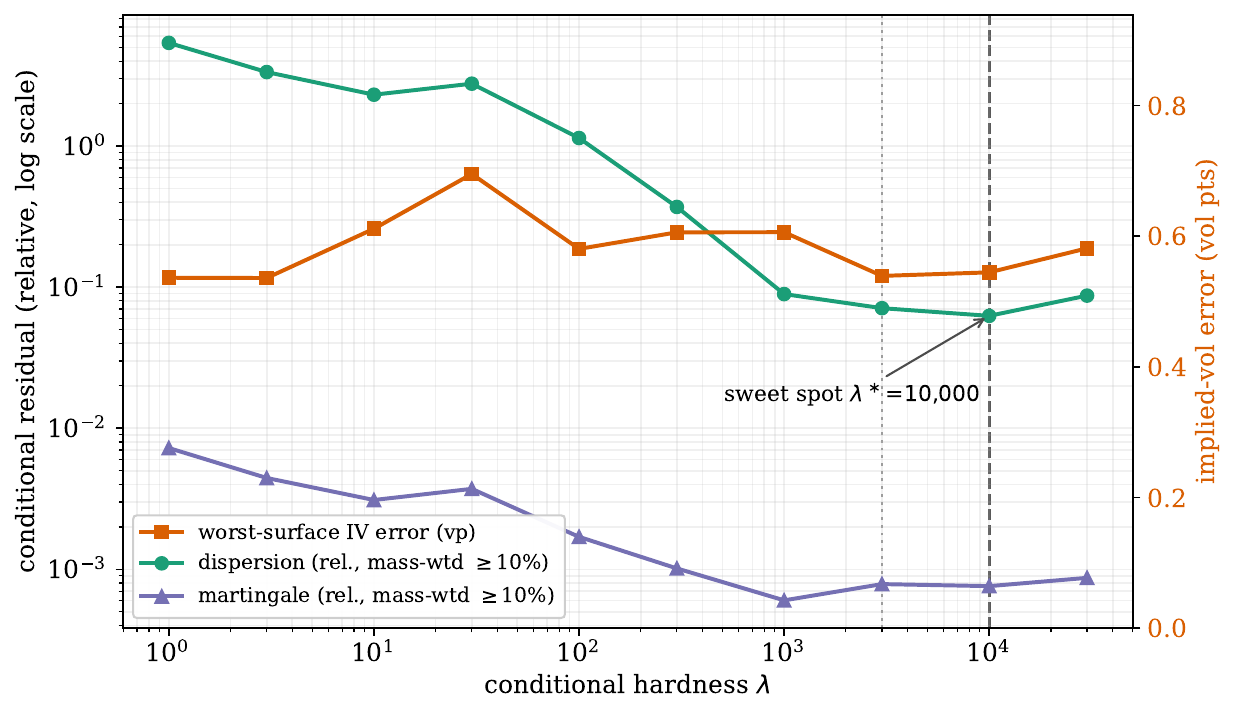}
\caption{Finite-budget penalty path on the 2026-04-21 five-surface inputs.
The plot displays the worst fitted-smile error and the thresholded bulk conditional statistics against $\lambda$.
The selected $\lambda=10^4$ point balances the reported diagnostics within the tested schedule; local reversals preclude interpreting the curve as a monotone convergence path.}
\label{fig:market_penalty}
\end{figure}

\begin{figure}[ht]
\centering
\includegraphics[width=\textwidth]{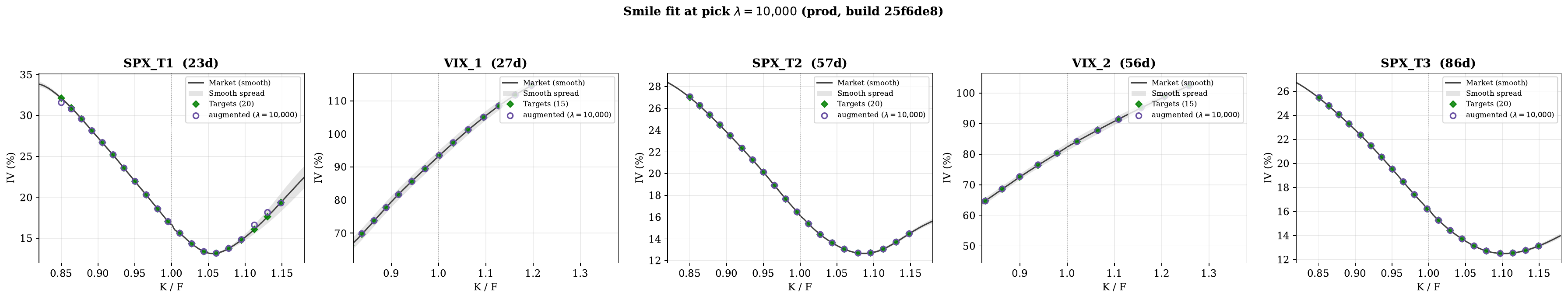}
\caption{Five fitted implied-volatility smiles at the representative $\lambda=10^4$ operating point.
The worst reported error over the selected strikes is $0.55$ volatility points, with thresholded bulk statistics $(\edisp,\emart)=(6.3\times10^{-2},7.6\times10^{-4})$.
The statistic describes the aligned five-axis approximation and the retained conditioning cells specified in \cref{subsec:market_setup}.}
\label{fig:market_smiles}
\end{figure}

Separating the two conditional penalties clarifies which family drives this calculation.
With the dispersion penalty pinned at $100$, the martingale statistic changes only from about $1.7\times10^{-3}$ to $1.5\times10^{-3}$ as $\lambda_{\mathrm{mart}}$ rises to $10^4$.
With the martingale penalty pinned at $100$, the dispersion statistic falls from $5.14$ to $5.7\times10^{-2}$.
Thus dispersion is the binding soft family for this implementation and dataset; this is an empirical attribution, not a universal property of SPX--VIX calibration.

\begin{table}[ht]
\centering
\caption{Per-family penalty ablation on the same numerical problem.
One conditional penalty is varied while the other is fixed at $100$.
The residual is the varied family's thresholded bulk statistic.}
\label{tab:market_family}
\small
\begin{tabular}{@{}rcc@{}}
\toprule
sweep $\lambda_{\mathrm{mart}}$ (dispersion pinned) & $\emart$ & worst smile (vp) \\
\midrule
$1$--$1{,}000$ & $\sim1.7\times10^{-3}$ & $0.58$ \\
$10{,}000$ & $1.5\times10^{-3}$ & $0.58$ \\
\midrule
sweep $\lambda_{\mathrm{disp}}$ (martingale pinned) & $\edisp$ & worst smile (vp) \\
\midrule
$1$ & $5.14$ & $0.54$ \\
$100$ & $1.14$ & $0.58$ \\
$1{,}000$ & $8.9\times10^{-2}$ & $0.61$ \\
$10{,}000$ & $5.7\times10^{-2}$ & $0.61$ \\
\bottomrule
\end{tabular}
\end{table}

\subsection{Conditional Post-Projection Diagnostics}\label{subsec:market_projection}

A separate diagnostic on the same $1.2\times10^6$-cell grid starts from the law obtained without a conditional penalty and applies a conditional-only Newton projection.
\Cref{tab:market_projection} compares the three independently recorded regimes using the common maximum diagnostics displayed in the corresponding heatmaps.
The conditional-only projection reduces the displayed conditional errors below $10^{-6}$ but increases the worst displayed smile error to $10.8$ volatility points.
This is a conditional projection applied after the no-conditional-penalty solve, not evidence that the exact joint affine system is infeasible.

\begin{table}[ht]
\centering
\caption{Common heatmap diagnostics for the no-conditional-penalty, soft-$\lambda=100$, and conditional-only projection regimes.
The conditional columns are maximum relative errors over the populated cells displayed by that experiment; they are different from the mass-weighted summaries in \cref{tab:market_penalty}.}
\label{tab:market_projection}
\small
\begin{tabular}{@{}lccc@{}}
\toprule
Regime & worst smile (vp) & dispersion max & martingale max \\
\midrule
No conditional penalty & $0.531$ & $19.52$ & $6.06\times10^{-2}$ \\
Soft conditional penalty, $\lambda=100$ & $0.531$ & $10.32$ & $1.52\times10^{-2}$ \\
Conditional-only projection & $10.786$ & $1.4\times10^{-7}$ & $2.9\times10^{-10}$ \\
\bottomrule
\end{tabular}
\end{table}

\begin{figure}[ht]
\centering
\includegraphics[width=\textwidth]{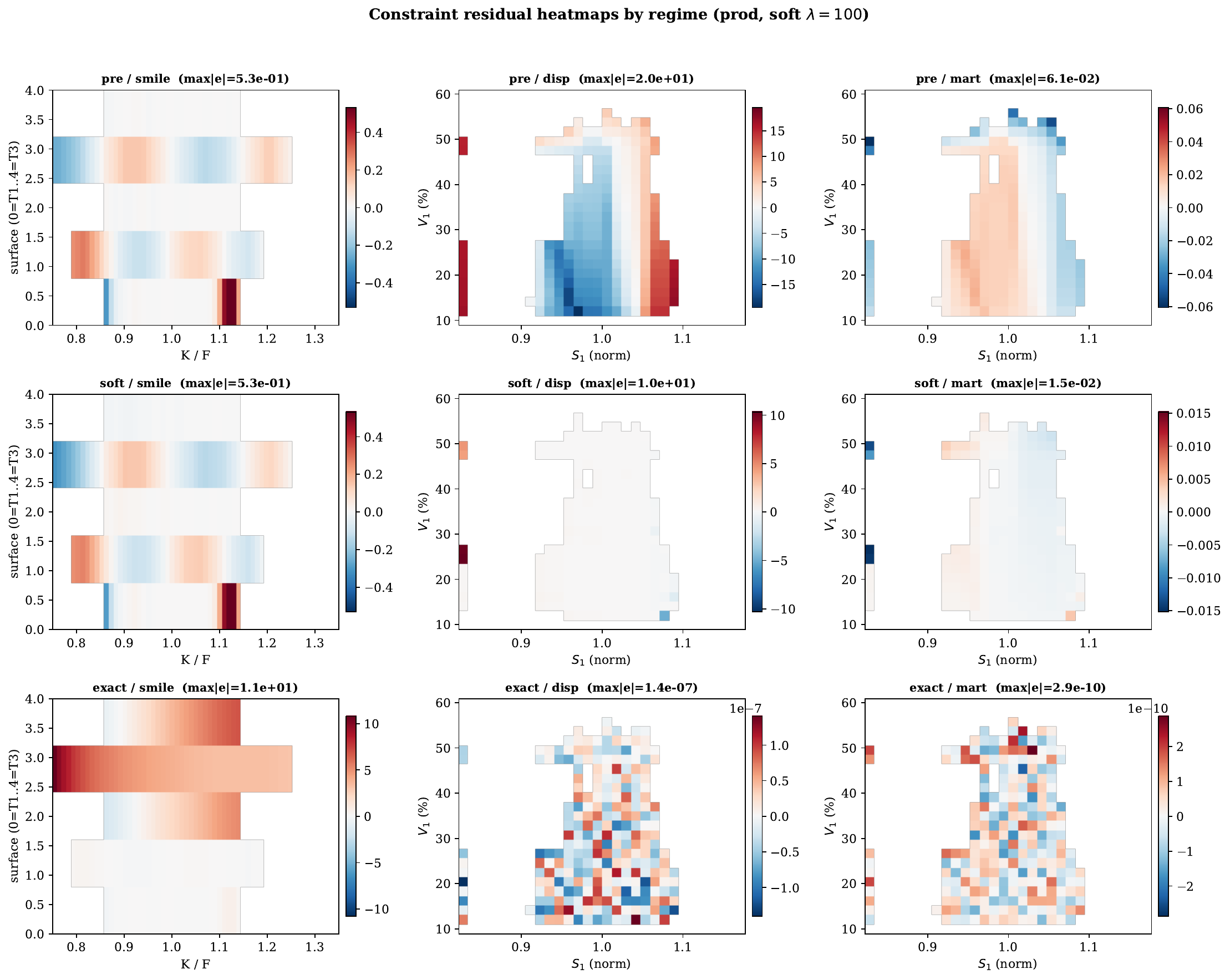}
\caption{Spatial diagnostics for the no-conditional-penalty, soft-$\lambda=100$, and conditional-only projection regimes.
The conditional projection makes the displayed conditional maps nearly exact while concentrating the deterioration in fitted-smile regions; the soft run leaves the quote fit near its initial level and reports the remaining conditional discrepancy.}
\label{fig:market_heatmaps}
\end{figure}

The recorded alternating-projection trace gives a complementary finite-budget diagnostic.
Within each exact round, the conditional update reduces its own residual and the subsequent quote-row update reopens it.
The post-round endpoints nevertheless trend downward, so the trace does not prove a limit cycle or asymptotic nonconvergence.
Over the recorded budgets, the best joint residual was $1.51\times10^{-3}$ for exact alternation and $1.68\times10^{-4}$ for the augmented run, approximately a nine-fold difference.
The augmented trace also contains local reversals; the comparison is therefore a finite-budget result only.

\begin{figure}[ht]
\centering
\includegraphics[width=0.49\textwidth]{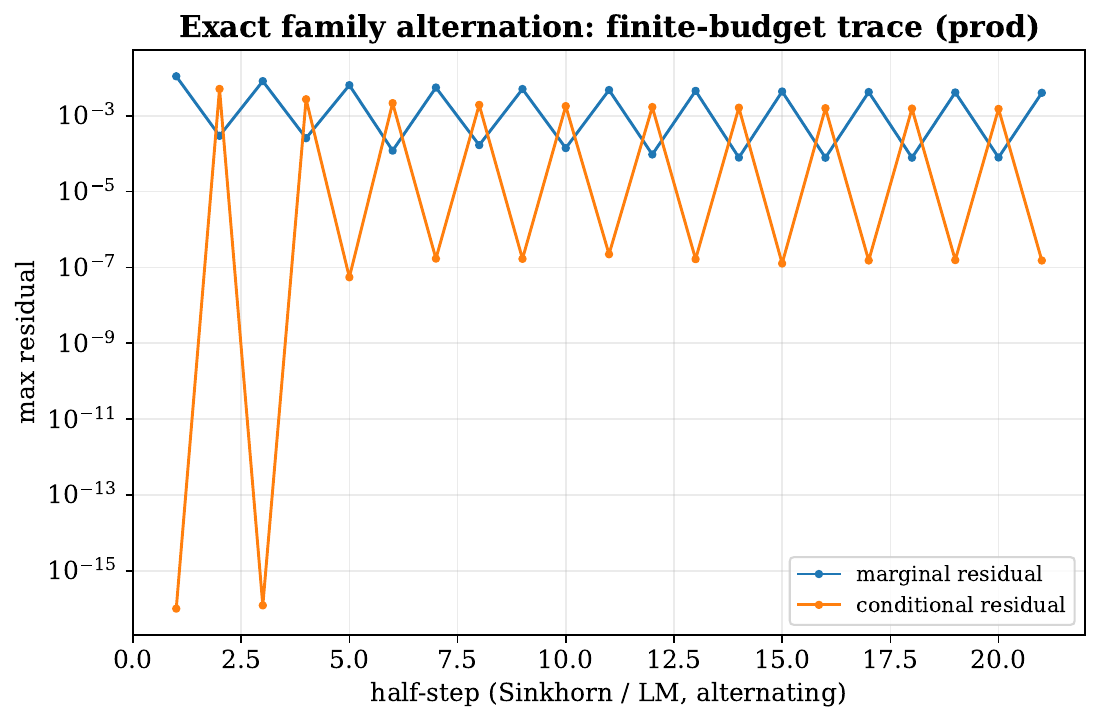}\hfill
\includegraphics[width=0.49\textwidth]{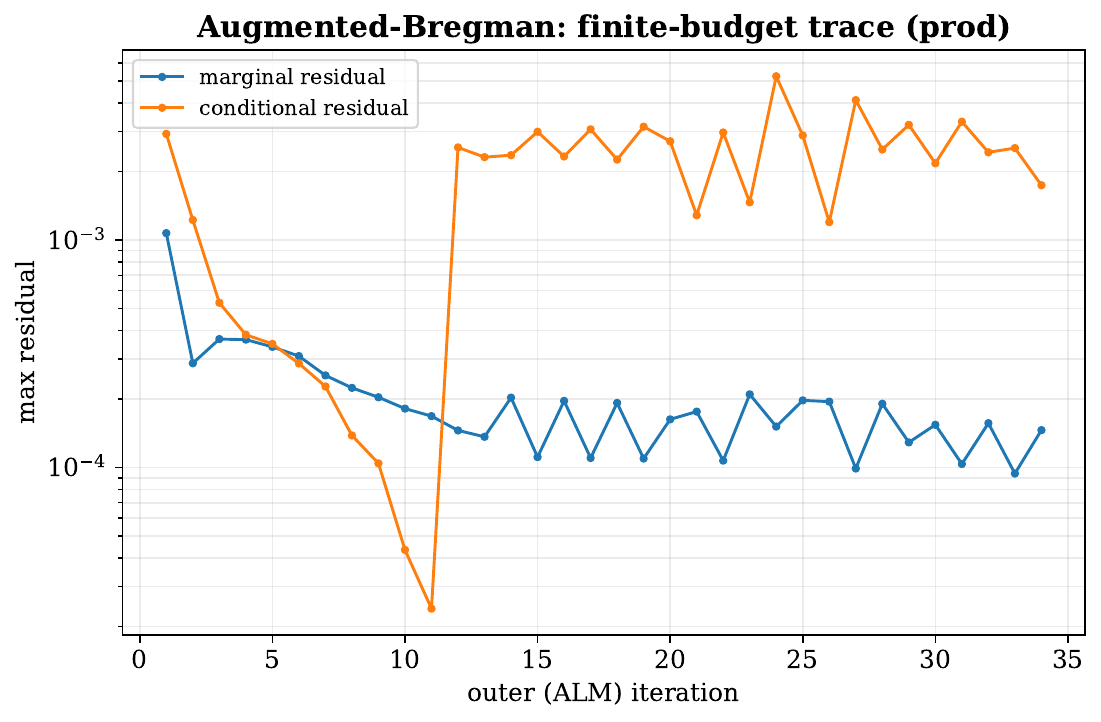}
\caption{Recorded finite-budget traces for exact family alternation (left) and the augmented method (right).
The sawtooth behavior shows the handoff between quote and conditional updates.
The reported comparison concerns the best joint residual reached within the tested budgets and is not an asymptotic convergence claim.}
\label{fig:market_oscillation}
\end{figure}

\subsection{Misaligned-Date Calibration on a Merged Timeline}\label{subsec:market_misaligned}

A separate experiment used a 2026-05-13 snapshot with three SPX surfaces and two VIX surfaces.
The seven-axis merged-timeline model includes intermediate SPX states at the two VIX dates.
The saved seven-axis calculation has $1.12\times10^6$ cells and reports a maximum conditional-row residual of $9.4\times10^{-4}$ and a maximum $A_h$-row residual of $1.0\times10^{-4}$.
Compared with the five-axis interpolated baseline in \cref{tab:market_misaligned}, the seven-axis merged-timeline model improves the reported maximum and mean implied-volatility errors on that instance.

The two 30-day VIX spans overlap by two days.
Consequently this calculation lies outside the non-overlap assumption of \cref{prop:merged_reference}; it is evidence that the enlarged numerical state space can be useful, but it does not validate the exact reference construction proved there.
It also uses a different snapshot from the primary 2026-04-21 experiments.

\begin{table}[ht]
\centering
\caption{Five-axis and seven-axis merged-timeline fits on the 2026-05-13 misaligned-date instance.
The seven-axis model includes intermediate SPX states at the two VIX dates.
Entries are maximum and mean implied-volatility errors over the selected strikes, in volatility points.
Because the two VIX spans overlap, the calculation is not covered by \cref{prop:merged_reference}.}
\label{tab:market_misaligned}
\small
\begin{tabular}{@{}lcccccr@{}}
\toprule
 & SPX$_{T_1}$ & VIX$_1$ & SPX$_{T_2}$ & VIX$_2$ & SPX$_{T_3}$ & cells \\
\midrule
$5$D interpolated, max & $0.221$ & $1.228$ & $0.119$ & $0.472$ & $0.069$ & $1.20\times10^{6}$ \\
$5$D interpolated, mean & $0.103$ & $0.261$ & $0.053$ & $0.118$ & $0.025$ & $1.20\times10^{6}$ \\
\midrule
$7$D merged timeline, max & $0.059$ & $0.711$ & $0.071$ & $0.240$ & $0.027$ & $1.12\times10^{6}$ \\
$7$D merged timeline, mean & $0.009$ & $0.135$ & $0.012$ & $0.077$ & $0.008$ & $1.12\times10^{6}$ \\
\bottomrule
\end{tabular}
\end{table}

\subsection{Empirical Scaling}\label{subsec:market_timing}

Finally, a fixed $720$-sweep schedule was run on single-threaded grids ranging from approximately $78{,}000$ to $1.2$ million cells.
Over this tested range, a log--log fit gave runtime proportional to $N^{1.04}$ with $R^2=0.9998$.
Recorded wall time ranged from approximately $80$ seconds to $22$ minutes, while problem construction on the largest grid took less than $6$ seconds.
This empirical exponent is consistent with the linear-in-$d$ per-sweep accounting of \cref{app:complexity}; it is not an iteration-complexity theorem and no claim is made beyond the tested hardware and schedule.

\begin{figure}[ht]
\centering
\includegraphics[width=0.55\textwidth]{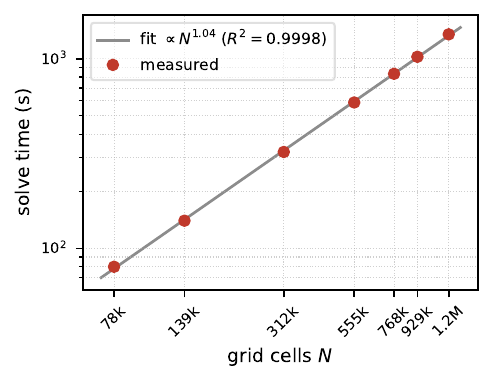}
\caption{Recorded single-thread wall time against five-axis grid size for a fixed $720$-sweep schedule.
The fitted relation over the tested range is proportional to $N^{1.04}$ with $R^2=0.9998$; the largest $1.2$-million-cell run took approximately $22$ minutes.}
\label{fig:market_timing}
\end{figure}

\section{Conclusion and Future Work}\label{sec:conclusion}

We developed a global state-space formulation for multi-maturity SPX--VIX calibration.
For its finite-dimensional implementation, we proposed an augmented-Bregman mirror-descent scheme.
The exact theory draws a sharp line between feasibility and identification: local and global nonemptiness are equivalent, but stitching is a block-preserving Markovian projection that can discard dependence relevant to multi-period pricing and risk.
With the standard Markov reference, entropy selects this stitched law as the minimum-information completion; global coupling becomes economically operative when cross-period information or objectives are supplied.

The numerical contribution addresses a different issue.
On finite supports, marginal and conditional rows can be incompatible at the requested tolerances.
The proposed scheme preserves the fit to observable quote moments and exposes the remaining martingale and dispersion residuals along a finite-budget penalty path.
The controlled synthetic experiment shows why this explicit allocation is preferable to hiding an unavoidable discrepancy in prescribed marginals.
The market-data calculations then illustrate the same allocation on smoothed SPX and VIX surfaces: fitted-smile errors remain below $0.70$ volatility points across the reported penalty sweep while the thresholded bulk conditional diagnostics improve substantially.
The independent-block seam and conditional post-projection experiments are practical finite-budget diagnostics, not evidence against exact local--global feasibility equivalence.

\paragraph{Future directions.} A central challenge is that the full coupling tensor grows exponentially in the number of maturities $m$, while the theoretical advantage of the global model derives precisely from conditioning on the full history.
Promising directions that preserve the non-Markovian character include:
\emph{truncated-history conditioning} (windowed constraints on the recent path, exponential in the window but independent of $m$);
\emph{parametric history dependence} (conditioning on low-dimensional path summary statistics, e.g.\ a weighted average of past volatilities, optionally pre-encoded into the reference measure);
\emph{continuous dual parameterization} (optimizing marginal potentials and conditional multipliers as functions via stochastic gradient ascent on the dual, avoiding materialization of the tensor); and GPU/sparse-grid acceleration.
Further directions include robust price bounds over compatible block gluings, history-dependent reference measures, cross-period calibration targets, a rigorous treatment of overlapping VIX windows, risk sensitivities, and extensions to other asset classes with analogous identification gaps.

\clearpage
\appendix
\section*{Appendix}

\section{Proofs of the Structural and Variational Results}\label[appendix]{app:proofs}

\subsection{Band compatibility}\label[appendix]{app:proof_band_compatibility}

\begin{proof}[Proof of \cref{prop:band_compatibility}]
If $\mu\in\cP_{\mathrm{full}}^{\mathrm{band}}$, set $\alpha_i=\Law_\mu(S_i)$ and $\beta_i=\Law_\mu(V_i)$.
The adjacent triple law $\Law_\mu(S_i,V_i,S_{i+1})$ is locally feasible by the tower property, so $(\alpha_i,\beta_i,\alpha_{i+1})\in\mathfrak R_i$ for every $i$.
The sequence is common across all blocks because each $S_i$ is one random variable under $\mu$.

Conversely, suppose one sequence $(\alpha_1,\beta_1,\ldots,\beta_{m-1},\alpha_m)$ belongs to all adjacent relations.
Choose $\nu_i\in\cP(\alpha_i,\beta_i,\alpha_{i+1})$ for each $i$ and disintegrate each later block with respect to its first SPX coordinate.
The product
\[
\nu_1(ds_1,dv_1,ds_2)
\prod_{i=2}^{m-1}\nu_i(dv_i,ds_{i+1}\mid s_i)
\]
is well-defined because adjacent blocks share $\alpha_i$.
It preserves every selected marginal and, by the same conditional-expectation argument as in \cref{app:proof_feasibility_equivalence}, satisfies (C2)--(C3).
It therefore belongs to $\cP_{\mathrm{full}}^{\mathrm{band}}$.
Separate nonemptiness of the $\mathfrak R_i$ need not yield one common intermediate $\alpha_i$, which proves the final statement.
\end{proof}

\subsection{Feasibility equivalence and Markovization}\label[appendix]{app:proof_feasibility_equivalence}

\begin{proof}[Proof of \cref{thm:feasibility_equivalence}]
Let $\mu\in\cP_{\mathrm{full}}$.
Since $\sigma(S_i,V_i)\subseteq\cF_i$, the tower property gives
\begin{align*}
\Embb^\mu[S_{i+1}\mid S_i,V_i]
&=\Embb^\mu[\Embb^\mu[S_{i+1}\mid\cF_i]\mid S_i,V_i]=S_i,\\
\Embb^\mu[L(S_{i+1}/S_i)\mid S_i,V_i]
&=\Embb^\mu[\Embb^\mu[L(S_{i+1}/S_i)\mid\cF_i]\mid S_i,V_i]=V_i^2.
\end{align*}
Thus $\nu_i^\mu:=\Law_\mu(S_i,V_i,S_{i+1})$ belongs to $\cP_i$ for every $i$.

The product in~\eqref{eq:markovization} preserves $\nu_1^\mu$.
Inductively, if it has the same $S_i$ marginal as $\mu$, adjoining the kernel $\mu(dv_i,ds_{i+1}\mid s_i)$ gives the same $(S_i,V_i,S_{i+1})$ law $\nu_i^\mu$; hence it also preserves the next $S_{i+1}$ marginal.
Every adjacent triple law is therefore preserved.
Moreover, the product factorization gives
\[
\Law_{\mathsf M\mu}(S_{i+1}\mid\cF_i)
=\Law_{\nu_i^\mu}(S_{i+1}\mid S_i,V_i),
\]
so the two local conditional identities imply (C2)--(C3) under $\mathsf M\mu$.
Thus $\mathsf M\mu\in\cP_{\mathrm{stitch}}\subseteq\cP_{\mathrm{full}}$.

Finally, if every $\cP_i$ is nonempty, choose $\nu_i\in\cP_i$ and glue them by the product in~\eqref{eq:stitched_class}; their prescribed shared SPX marginals agree, and the preceding argument proves that the resulting law is globally feasible.
The reverse implications follow from the block preservation just established.
\end{proof}

\subsection{Strict inclusion and non-identification}\label[appendix]{app:proof_strict_inclusion}

\begin{proof}[Proof of \cref{thm:strict_inclusion}]
The inclusion follows from \cref{thm:feasibility_equivalence}.
For strictness, take $m=3$, set $S_0=S_1=100$, and let $V_1$ be equally likely to equal $v_L=0.15$ or $v_H=0.55$.
For $(v,a)=(v_L,8)$ or $(v_H,30)$, set
\[
q(v,a):=\frac{v^2}{-(2/\tau)\log(1-(a/100)^2)}\in(0,1/2)
\]
and, conditional on $V_1=v$, assign probabilities $q(v,a)$, $1-2q(v,a)$, and $q(v,a)$ to $S_2=100-a$, $100$, and $100+a$.
Symmetry gives $\Embb[S_2\mid S_1,V_1]=S_1$, while
\[
q(v,a)\left[-\frac{2}{\tau}\log(1-a/100)-\frac{2}{\tau}\log(1+a/100)\right]=v^2,
\]
which gives the first dispersion identity.

Set $V_2=0.20$ when $V_1=v_L$ and $V_2=0.60$ when $V_1=v_H$ and, conditionally on the history, let
\[
S_3=S_2(1+\varepsilon\delta(V_2)),
\qquad
\delta(v):=\sqrt{1-e^{-v^2\tau}},
\]
where $\varepsilon$ is conditionally uniform on $\{-1,1\}$.
Then $\Embb[S_3\mid\cF_2]=S_2$ and
\[
\Embb[L(S_3/S_2)\mid\cF_2]
=-\frac{1}{\tau}\log(1-\delta(V_2)^2)=V_2^2,
\]
so the resulting finite-support law $\mu$ belongs to $\cP_{\mathrm{full}}$.

Both $V_1$ branches reach $S_2=100$ with positive probability, but $V_2$ identifies the branch there.
Hence $V_2\not\perp V_1\mid S_2$, so~\eqref{eq:stitched_ci} fails and $\mu\notin\cP_{\mathrm{stitch}}$.
Its Markovization belongs to $\cP_{\mathrm{stitch}}$ and preserves both adjacent triple laws by \cref{thm:feasibility_equivalence}.
Because the two finite laws differ, the indicator of any atom on which their masses differ is a bounded separating payoff.
Finally, equality $\mu=\mathsf M\mu$ is equivalent to equality of the full-history kernels and the $S_i$-conditional kernels in~\eqref{eq:markovization}, which is exactly~\eqref{eq:stitched_ci}.
For $m>3$, append degenerate feasible periods.
\end{proof}

\subsection{Information and entropy identities}\label[appendix]{app:proof_information_loss}

\begin{proof}[Proof of \cref{prop:information_loss}]
Disintegrate $\mu$ chronologically as
\[
\mu(d\mathbf s,d\mathbf v)
=\nu_1^\mu(ds_1,dv_1,ds_2)
\prod_{i=2}^{m-1}\mu(dv_i,ds_{i+1}\mid H_{i-1},s_i),
\]
whereas~\eqref{eq:markovization} replaces each displayed continuation kernel by $\mu(dv_i,ds_{i+1}\mid s_i)$.
The chain rule for relative entropy therefore gives
\begin{align*}
D_{\mathrm{KL}}(\mu\Vert\mathsf M\mu)
&=\sum_{i=2}^{m-1}
\Embb^\mu\!\left[
\log\frac{d\mu((V_i,S_{i+1})\mid H_{i-1},S_i)}
{d\mu((V_i,S_{i+1})\mid S_i)}
\right]\\
&=\sum_{i=2}^{m-1}I_\mu((V_i,S_{i+1});H_{i-1}\mid S_i).
\end{align*}
Each summand is nonnegative and vanishes exactly when the corresponding conditional-independence relation~\eqref{eq:stitched_ci} holds.
\end{proof}

\subsection{Duality and existence}\label[appendix]{app:proof_duality}

\begin{proof}[Proof of \cref{thm:duality}]
If $\mu\in\cP_{\mathrm{full}}$, then every bounded predictable gain in~\eqref{eq:portfolio_value} has zero expectation by (C2)--(C3), while (C1) prices the static terms at their initial costs.
Thus $\Embb^\mu[\Pi]=\mathrm{Cost}(\Pi)$ for every $\Pi\in\cU_{\mathrm{full}}$, ruling out a pointwise nonnegative portfolio of negative cost.

Conversely, for a law satisfying (C1), the identity
\[
\Embb^\mu[\Delta_{S,i}(\mathbf S^i,\mathbf V^i)(S_{i+1}-S_i)]=0
\]
for every bounded $\cF_i$-measurable $\Delta_{S,i}$ is equivalent to (C2); the analogous identity with $L(S_{i+1}/S_i)-V_i^2$ is equivalent to (C3).
The dynamic multipliers therefore dualize exactly the two conditional families.
On a finite state space, infeasibility of the resulting linear system is equivalent by Farkas' lemma to the existence of multipliers whose combined payoff is nonnegative on every state and whose initial cost is negative.
This is the finite-dimensional form of the martingale-transport separation argument; continuum analogues require the additional hypotheses discussed in \cref{subsec:duality}, as in \citet{Beiglbck2013,Beiglbck2017} and the SPX--VIX formulation of \citet{guyon2020joint}.
\end{proof}

\subsection{Entropic existence and the reference measure}\label[appendix]{app:proof_entropic_af}

\begin{proof}[Proof of \cref{thm:entropic_af}]
Take a minimizing sequence in the asserted weakly compact entropy sublevel.
It has a weakly convergent subsequence; weak closedness of $\cP_{\mathrm{full}}$ keeps the limit feasible, and lower semicontinuity of relative entropy makes that limit a minimizer.
Relative entropy is strictly convex on laws dominated by $\bar\mu$, while $\cP_{\mathrm{full}}$ is convex, so two distinct finite-entropy minimizers cannot exist.
The resulting feasible law rules out global arbitrage by \cref{thm:duality}.
On a finite allowed support, the simplex is compact, the affine constraint set is closed, and a strictly positive reference dominates every feasible vector; hence nonemptiness supplies all of the stated conditions.
\end{proof}

\begin{proof}[Proof of \cref{prop:ref_props}]
The iterated product in~\eqref{eq:complete_ref} is a probability measure by construction.
The lognormal kernel $T_i$ satisfies
\[
\Embb[S_{i+1}\mid S_i=s_i,V_i=v_i]=s_i
\]
and, since $\log(S_{i+1}/S_i)=v_i\sqrt\tau G-v_i^2\tau/2$,
\[
\Embb[L(S_{i+1}/S_i)\mid S_i=s_i,V_i=v_i]=v_i^2.
\]
The product construction makes these identities valid conditional on the full history and draws each $V_i$ from $\mu_{V_i}$ independently of the preceding state; it also starts from $\mu_{S_1}$.
Its support is the product-kernel support generated by these factors, not necessarily the entire ambient space.
On a finite allowed grid, strictly positive reference weights dominate every law on that grid.
In the continuum, neither topological full support nor equality of supports implies absolute continuity, which is why \cref{thm:entropic_af} assumes a feasible finite-entropy law separately.
\end{proof}

\subsection{Missing intermediate SPX marginals}\label[appendix]{app:proof_stitch_illposed}

\begin{proof}[Proof of \cref{prop:stitch_illposed}]
A standard three-variable block beginning at $T_i^V$ takes the initial SPX marginal as part of its input.
When $T_i^V\notin\mathcal T_S$, the supplied SPX option surfaces specify no such marginal.
Consequently the block is not determined by the stated market inputs: one must either select a marginal by an additional interpolation rule or optimize over it as an endogenous variable in an enlarged state space.
\end{proof}

\subsection{Merged-timeline reference}\label[appendix]{app:proof_merged_reference}

\begin{proof}[Proof of \cref{prop:merged_reference}]
Starting from the prescribed law of $S_{t_1}$, successively adjoining the stated VIX draws and SPX transition kernels defines a probability law by finite iterated disintegration.
Its $V_i$ marginals are the prescribed ones by construction.
Partition a non-overlapping VIX span into merged-timeline subintervals of lengths $\Delta t_k$ and let $d_{i,k}$ be the corresponding forward ratios.
Conditional on $\cF_{T_i^V}$ and $V_i$, define independent increments by
\[
\frac{S_{t_{k+1}}}{d_{i,k}S_{t_k}}
=\exp\!\left(V_i\sqrt{\Delta t_k}\,G_k-\tfrac12V_i^2\Delta t_k\right),
\qquad G_k\sim\mathcal N(0,1).
\]
Each increment has conditional mean one, so their product is a forward-adjusted martingale over the span.
Since $\sum_k\Delta t_k=\tau_i$ and the Gaussian terms have conditional mean zero,
\[
\Embb\!\left[-\frac{2}{\tau_i}\sum_k
\log\!\left(\frac{S_{t_{k+1}}}{d_{i,k}S_{t_k}}\right)
\middle|\cF_{T_i^V}\right]
=\frac{1}{\tau_i}\sum_k V_i^2\Delta t_k
=V_i^2.
\]
The sum of logs is the log of the forward-adjusted endpoint ratio.
Non-overlap ensures that no subinterval is assigned two different VIX levels, while the stipulated kernels on uncovered intervals preserve the martingale property.
The complete chronological product therefore has all the properties claimed.
\end{proof}

\subsection{KL selection of the stitched completion}\label[appendix]{app:proof_kl_markovization}

\begin{proof}[Proof of \cref{prop:kl_markovization}]
Apply the relative-entropy chain rule to the full-history kernels of $\mu$ and the $S_i$-conditional kernels of $\bar\mu$.
For each $i\ge2$, insert the intermediate kernel $\mu(dv_i,ds_{i+1}\mid s_i)$ into the log-density ratio.
The terms comparing the full-history kernel with this intermediate kernel sum to $D_{\mathrm{KL}}(\mu\Vert\mathsf M\mu)$ by \cref{prop:information_loss}.
The remaining terms depend only on the adjacent block laws; because $\mu$ and $\mathsf M\mu$ share those laws, together with the first block, they sum to $D_{\mathrm{KL}}(\mathsf M\mu\Vert\bar\mu)$.
This proves~\eqref{eq:kl_markovization}.
Since $\mathsf M\mu$ is feasible by \cref{thm:feasibility_equivalence}, Markovization weakly decreases the objective, strictly so whenever $D_{\mathrm{KL}}(\mu\Vert\mathsf M\mu)>0$.
Therefore every finite-entropy minimizer is stitched.
\end{proof}

\subsection{Finite-grid penalty limit}\label[appendix]{app:proof_penalty_limit}

\begin{proof}[Proof of \cref{prop:penalty_limit}]
The set $\mathcal H$ is a closed subset of the finite-dimensional simplex and is therefore compact.
The objective in~\eqref{eq:penalty_problem} is continuous on $\mathcal H$ and strictly convex because $\bar\pi$ is positive, proving existence and uniqueness of $\pi_\lambda$.
For any $\pi_\star\in\argmin_{\mathcal H}r$, optimality gives
\[
D_{\mathrm{KL}}(\pi_\lambda\Vert\bar\pi)+\lambda r(\pi_\lambda)
\leq
D_{\mathrm{KL}}(\pi_\star\Vert\bar\pi)+\lambda r_\star.
\]
Relative entropy is nonnegative, while it is bounded above on the finite simplex when $\bar\pi$ is positive.
Consequently $0\le r(\pi_\lambda)-r_\star\le D_{\mathrm{KL}}(\pi_\star\Vert\bar\pi)/\lambda$, so $r(\pi_\lambda)\to r_\star$.
Compactness gives limit points, all in $\argmin_{\mathcal H}r$ by continuity.
Taking $\pi_\star$ to be the minimum-KL element of that set and using the same optimality inequality shows that every limit point has no larger relative entropy.
Strict convexity on the convex set $\argmin_{\mathcal H}r$ makes this element unique, so the entire sequence converges to it.
The two stated cases follow from whether $r_\star$ is zero.
\end{proof}

\section{The Online-Mirror-Descent View and Infeasibility}\label[appendix]{app:omd_theory}

This appendix collects the mirror-descent interpretation of the solver, its relation to exact row projection, and a controlled analysis of what each scheme sacrifices under infeasibility.

\paragraph{Mirror descent and exact row projection.} Mirror descent is the standard first-order scheme that moves along the (sub)gradient of the objective in the geometry induced by a strictly convex mirror map and projects back with the associated Bregman divergence~\citep{nemirovski1983problem,beck2003mirror}.
With the negative-entropy mirror map on the simplex its update is the multiplicative (exponentiated-gradient) reweighting $\pi\odot\exp(-\alpha g)$ followed by renormalization, and the exact KL/Bregman projection onto a single linear constraint is the Sinkhorn-type diagonal rescaling used for the hard families~\citep{Benamou2015Iterative}.
The augmented-Bregman solver of \cref{alg:augmented} uses the residual gradient~\eqref{eq:aug_gradient} for the conditional block and cyclic Bregman--Newton corrections for the $A_h$ rows.

\begin{proposition}[Exact row projection as an entropic mirror step]\label{prop:sinkhorn_omd}
Let $p$ be strictly positive on a finite support and suppose $\{q\in\Delta_d:a^\top q=b\}$ meets the relative interior of the simplex.
The unique solution of
\[
\min_{q\in\Delta_d}\left\{D_{\mathrm{KL}}(q\Vert p):a^\top q=b\right\}
\]
has the form
\[
q_j(\theta)=\frac{p_j e^{\theta a_j}}{\sum_k p_k e^{\theta a_k}},
\qquad a^\top q(\theta)=b.
\]
Thus replacing the first-order update~\eqref{eq:mirror_step} by this solved KL proximal step gives the usual exponential Bregman projection~\citep{csiszar1975,bauschke1997}.
Visiting all rows cyclically with these solved proximal steps gives the classical cyclic Bregman/Sinkhorn iteration.
For the separately defined pure-penalty problems~\eqref{eq:penalty_problem}, the exact-constraint limit is given by \cref{prop:penalty_limit}.
\end{proposition}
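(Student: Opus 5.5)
The proof is a finite-dimensional convex program handled by Lagrange duality. First, existence and uniqueness. The feasible set $\{q\in\Delta_d:a^\top q=b\}$ is a nonempty compact convex polytope. Since $p>0$, the map $q\mapsto D_{\mathrm{KL}}(q\Vert p)=\sum_j q_j\log(q_j/p_j)$ is finite and continuous on the whole simplex, with the convention $0\log0=0$. It is also strictly convex there. Hence a minimizer exists and is unique.

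Next, I identify the form of the minimizer through the constraint qualification. Because the affine set meets the relative interior of the simplex, Slater's condition holds for the constraints $\sum_j q_j=1$ and $a^\top q=b$. Strong duality with attained multipliers then follows.

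The key step is to show that the minimizer $q^\star$ is strictly positive. If some $q^\star_j=0$, take a strictly positive feasible $\hat q$ and move along the segment toward it. The directional derivative of $D_{\mathrm{KL}}(\cdot\Vert p)$ at $q^\star$ in that direction is $-\infty$, because of the $\log q_j$ terms at zero coordinates. This contradicts optimality.

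With $q^\star$ interior, the stationarity conditions of the Lagrangian read
\[
\log(q_j/p_j)+1-\theta a_j-\kappa=0,
\]
where $\theta$ and $\kappa$ are the multipliers for $a^\top q=b$ and $\sum_j q_j=1$. This gives $q_j\propto p_je^{\theta a_j}$, and $\kappa$ is fixed by normalization. The multiplier $\theta$ is characterized by $a^\top q(\theta)=b$. Equivalently, $\theta$ maximizes the concave dual function $\theta b-\log\sum_k p_ke^{\theta a_k}$. The derivative of the log-partition term is $a^\top q(\theta)$, which is nondecreasing in $\theta$ and equals the variance of $a$ under $q(\theta)$.

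Conversely, I check sufficiency. For any $\theta$ with $a^\top q(\theta)=b$ and any feasible $q$,
\[
D_{\mathrm{KL}}(q\Vert p)=D_{\mathrm{KL}}(q\Vert q(\theta))+\theta b-\log\sum_kp_ke^{\theta a_k}.
\]
The last two terms do not depend on $q$, so $q(\theta)$ is the minimizer. This is the Pythagorean identity of \citet{csiszar1975}. It settles the case where $a$ is constant on the support: then $b$ equals that constant, every $\theta$ works, and all of them give $q(\theta)=p$.

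The remaining assertions are interpretive. The solved problem is exactly the KL proximal step $\argmin_q\{\langle g,q\rangle+\eta^{-1}D_{\mathrm{KL}}(q\Vert\pi_t)\}$ of \eqref{eq:mirror_step} with $g=-\theta\eta^{-1}a$. Here $\theta$ is chosen implicitly so that the row is satisfied, rather than fixed in advance. So it is the exponential Bregman projection onto that row, and cycling it over the rows is the classical cyclic Bregman/Sinkhorn scheme \citep{bauschke1997}. The pointer to \cref{prop:penalty_limit} needs no argument.

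I expect the main obstacle to be the positivity step. It must be done carefully at the boundary of the simplex, and it is exactly where the relative-interior hypothesis is used. Without that hypothesis the minimizer could sit on a face, and no finite $\theta$ would represent it.
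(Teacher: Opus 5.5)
Your proposal is correct and follows the same route as the paper. The paper writes the Lagrangian stationarity condition to get $q_j\propto p_je^{\theta a_j}$, then uses $\frac{d}{d\theta}\log Z(\theta)=a^\top q(\theta)$ and $\frac{d^2}{d\theta^2}\log Z(\theta)=\operatorname{Var}_{q(\theta)}(a)$ to conclude that the multiplier is unique when $a$ is nonconstant. Your compactness and strict-convexity step, your interior-positivity step, and your Pythagorean-identity step supply what the paper leaves implicit: existence of the minimizer, the justification for using interior stationarity, and sufficiency.

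One wording slip: it is the derivative of $a^\top q(\theta)$, not $a^\top q(\theta)$ itself, that equals $\operatorname{Var}_{q(\theta)}(a)$. Since $q(\theta)>0$, this variance is strictly positive for nonconstant $a$, so $a^\top q(\theta)$ is strictly increasing and $\theta$ is unique; this is the extra point the paper records.
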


\begin{proof}
With multipliers $\lambda_0$ and $\lambda_1$ for normalization and the affine row, stationarity of
\[
\sum_j q_j\log(q_j/p_j)+\lambda_0\left(\sum_jq_j-1\right)
+\lambda_1(a^\top q-b)
\]
gives $q_j=p_j e^{\theta a_j}/Z(\theta)$ after absorbing constants and writing $\theta=-\lambda_1$.
The derivative of $\log Z(\theta)$ is $a^\top q(\theta)$ and its second derivative is $\operatorname{Var}_{q(\theta)}(a)$.
Hence the multiplier is unique when $a$ is nonconstant and $b$ lies in the relative interior of its attainable interval; the constant-row case is immediate.
Cyclically visiting these solved subproblems yields the exact Bregman iteration.
\end{proof}

\paragraph{What each scheme sacrifices under infeasibility.} The hard/soft split of \cref{subsec:discrete_formulation} is justified by a controlled synthetic experiment with an exact, tunable feasibility knob.
We discretize two centered Gaussian laws on the common $41$-point grid specified in \cref{sec:experiments} and impose both marginals together with the martingale rows on the $25$ source states above the stated mass threshold.
The variance ordering required by Strassen's theorem~\citep{strassen1965existence} motivates the two cases, and a direct LP verifies that the variance-expanded affine system is feasible while the variance-contracted system is infeasible even with the tail rows omitted.
Within each instance, three schemes differing in their priority allocation (cyclic row projection, an all-soft OMD penalty, and a marginal-priority hybrid) are compared in \cref{tab:feasibility}.
All runs start from the product of the two prescribed discrete marginals.
Rows are stored and visited in this fixed order: all source-marginal rows, all target-marginal rows, then the active martingale rows.
For a hard row, Newton iterations solve its scalar exponential tilt to absolute residual $10^{-13}$, subject to a cumulative log-tilt cap of $8$; normalization is restored after every row.
For soft rows, one batch exponentiated-gradient update with penalty $200$, base step $0.5$, and log-tilt cap $1$ follows the hard-row sweep.
The marginal-priority hybrid uses hard marginal rows and a soft martingale block without dual ascent; it is a controlled priority-allocation test, not an execution of \cref{alg:augmented}.
When feasible, cyclic row projection ends with marginal and conditional residuals $5.1\times10^{-6}$ and $8.8\times10^{-14}$, while the finite-budget hybrid holds the marginal residual to $2.8\times10^{-5}$.
When \emph{infeasible}, the schemes differ in what they sacrifice: cyclic row projection nearly satisfies the conditional rows but leaves marginal residual $1.9\times10^{-2}$, whereas the hybrid keeps the marginals about $25$ times tighter ($7.4\times10^{-4}$) and exposes a conditional residual of $1.1\times10^{-3}$.
The numerical ratio is specific to this order and finite budget; the qualitative design lesson is to state explicitly which family receives priority.

\begin{table}[ht]
\centering
\caption{Synthetic feasibility knob (discretized Gaussian marginals, $n=41$).
Entries are median maximum absolute affine residuals over the final $10\%$ of each run; cyclic row projection and the marginal-priority hybrid use $800$ sweeps, while all-soft OMD uses $6000$.
Under the stated row order and schedule, the marginal-priority method keeps the prescribed marginals about $25\times$ tighter than cyclic row projection.}
\label{tab:feasibility}
\small
\begin{tabular}{@{}lcccc@{}}
\toprule
 & \multicolumn{2}{c}{\textbf{Feasible} ($\rho_{\mathrm{var}}\approx1.3$)} & \multicolumn{2}{c}{\textbf{Infeasible} ($\rho_{\mathrm{var}}\approx0.7$)} \\
\cmidrule(lr){2-3}\cmidrule(lr){4-5}
\textbf{Scheme} & marginal & conditional & marginal & conditional \\
\midrule
Cyclic row projection & $5.1\times10^{-6}$ & $8.8\times10^{-14}$ & $1.9\times10^{-2}$ & $1.3\times10^{-5}$ \\
OMD (all soft)         & $3.1\times10^{-2}$ & $1.2\times10^{-3}$  & $4.5\times10^{-2}$ & $1.2\times10^{-3}$ \\
Marginal-priority      & $2.8\times10^{-5}$ & $2.5\times10^{-4}$  & $\mathbf{7.4\times10^{-4}}$ & $1.1\times10^{-3}$ \\
\bottomrule
\end{tabular}
\end{table}

\section{Complexity of the Augmented-Bregman Solver}\label[appendix]{app:complexity}

We collect the cost accounting for \cref{alg:augmented}.
Throughout, $d$ is the number of cells in the innovation-coordinate tensor $\pi$; for common grid sizes in the aligned discretization, $d=n_Sn_V^{m-1}n_Z^{m-1}$.
We count arithmetic operations per sweep.
The estimates are informal but capture the dependence on tensor size and maturity count.

\paragraph{The operator is sparse.} The conditional rows are structurally sparse.
Fix a transition $i$; the martingale (C2) and dispersion (C3) constraints partition the cells of $\pi$ into fibers indexed by the history $(s_1,v_1,\ldots,s_i,v_i)$, and each cell belongs to exactly one such fiber for each of the two families.
Each active cell enters at most one martingale and one dispersion fiber per transition, so $A_c$ has at most $2(m{-}1)d$ nonzeros.
The penalty gradient $A_c^\top(y+\lambda(A_c\pi-b_c))$, two sparse matrix--vector products and an element-wise reweighting, therefore costs $O(md)$.
Each normalization, forward, or selected option-payoff row scans at most $d$ cells; if $R_h$ such rows are visited, their cyclic corrections cost $O(R_hd)$ per sweep.

\paragraph{Cost per sweep and in total.}
A single sweep costs $O((m+R_h)d)$, which is linear in the tensor size for a fixed maturity count and fixed number of quote rows.
With the fixed $K_{\mathrm{out}}\times K_{\mathrm{in}}$ schedule of \cref{alg:augmented}, the measured work is therefore
$O(K_{\mathrm{out}}K_{\mathrm{in}}(m+R_h)d)$.
We do not infer a global iteration-complexity bound for the complete augmented scheme from the standard mirror-descent bound for a fixed convex objective, because the iteration includes cyclic row corrections, dual ascent, clipping, and an optional changing penalty.

\paragraph{Comparison.} The schemes differ in their per-sweep cost and finite-budget residual behavior on a coupled problem.
Stitching decomposes into $m{-}1$ independent three-variable solves and is cheaper, at the price of the conditional-independence assumption identified in \cref{thm:strict_inclusion}.
A conditional Newton projection adds a small solve per active fiber, while the augmented method uses sparse matrix--vector operations and row corrections.
The exponential growth of $d$ in $m$ is intrinsic to the full-history coupling and is what the scaling directions of \cref{sec:conclusion}, such as truncated-history conditioning and continuous dual parameterization, are meant to mitigate.

\section*{Acknowledgments}

The authors thank Junhyung Lyle Kim, Rudy Raymond, Ruslan Shaydulin and Rob Otter for valuable feedback and discussions.
We also acknowledge our colleagues at the Global Technology Applied Research Center of JPMorganChase for support throughout this work.

\section*{Disclaimer}
This paper was prepared for informational purposes by the Global Technology Applied Research center of JPMorgan Chase \& Co. This paper is not a product of the Research Department of JPMorgan Chase \& Co. or its affiliates. Neither JPMorgan Chase \& Co. nor any of its affiliates makes any explicit or implied representation or warranty and none of them accept any liability in connection with this paper, including, without limitation, with respect to the completeness, accuracy, or reliability of the information contained herein and the potential legal, compliance, tax, or accounting effects thereof. This document is not intended as investment research or investment advice, or as a recommendation, offer, or solicitation for the purchase or sale of any security, financial instrument, financial product or service, or to be used in any way for evaluating the merits of participating in any transaction.

\bibliographystyle{plainnat}
\bibliography{refs}

\end{document}